\newtheorem{proposition}{PROPOSITION}
\newenvironment{proof}{{\noindent\it Proof:}\quad}{\hfill $\square$\par}  
\def\Q{\mathbf{Q}}
\def\P{\mathbf{P}}
\def\I{\mathbf{I}}
\def\n{\mathbf{n}}
\def\r{\mathbf{r}}
\def\x{\mathbf{x}}
\def\y{\mathbf{y}}
\def\z{\mathbf{z}}
\begin{document}
\title{A Reduced Study for Nematic Equilibria on Two-Dimensional Polygons}
\author{Yucen Han$^{1}$}
\author{Apala Majumdar$^{2}$}
\author{Lei Zhang$^{3}$}
\affiliation{$^1$Beijing International Center for Mathematical Research, Peking University, Beijing 100871, China.\\
$^2$Department of Mathematics and Statistics, University of Strathclyde, Glasgow, G1 1XH, United Kingdom and Department of Mathematics Science, University of Bath, United Kingdom.\\
$^3$Beijing International Center for Mathematical Research, Center for Quantitative Biology, Peking University, Beijing 100871, China.}

\begin{abstract}
 We study reduced nematic equilibria on regular two-dimensional polygons with Dirichlet tangent boundary conditions, in  a reduced two-dimensional Landau-de Gennes framework, discussing their relevance in the full three-dimensional framework too. We work at a fixed temperature and study the reduced stable equilibria in terms of the edge length, $\lambda$ of the regular polygon, $E_K$ with $K$ edges. We analytically compute a novel "ring solution" in the $\lambda \to 0$ limit, with a unique point defect at the centre of the polygon for $K \neq 4$. The ring solution is unique. For sufficiently large $\lambda$, we deduce the existence of at least $\left[K/2 \right]$ classes of stable equilibria and numerically compute bifurcation diagrams for reduced equilibria on a pentagon and hexagon, as a function of $\lambda^2$, thus illustrating the effects of geometry on the structure, locations and dimensionality of defects in this framework.
\end{abstract}
\pacs{}

\maketitle
\section{Introduction}
Nematic liquid crystals (NLCs) are paradigm examples of soft orientationally ordered materials intermediate between solid and liquid phases of matter, with a degree of long-range orientational order. The orientational order manifests as distinguished directions of molecular alignment leading to anisotropic mechanical, optical and rheological properties \cite{dg, stewart}. NLCs are best known for their applications in the thriving liquid crystal display industry \cite{majumdar_pre_2007, kitson_geisow_apl} but they have tremendous potential in nanoscience, biophysics and materials design, all of which rely on a systematic theoretical approach to the study of NLC equilibria and dynamics. Further, these theoretical approaches promise a suite of technical tools for related applications in the study of surface/interfacial phenomena, active matter, polymers, elastomers and colloid science \cite{Marcus2012Transition,Cai2017Liquid,Teramoto2010Morphological,han2020pathways} and hence, have purpose beyond the specific field of NLCs.

This paper focuses on certain specific questions about stable NLC textures in two-dimensional (2D) domains and these questions are within the broad remit of pattern formation in partially ordered media in confinement, with emphasis on the effects of geometry and boundary conditions without any external fields. The set-up is simple but can give excellent insight into the energetic and geometric origins of interior and boundary defects, stable and unstable patterns and deeper questions pertaining to how we can tune stability by tuning defects, how do we classify unstable states, the role of unstable states in the energy landscape and in the longer-term, how does a system select an unstable transient state during switching mechanisms between distinct stable NLC equilibria. These are fundamental theoretical questions at the interface of topology, analysis, modelling and scientific computation with deep-rooted implications for physics and materials engineering. In particular, with sweeping experimental advances in designing micropatterned surfaces, thin three-dimensional (3D) geometries and 3D printing \cite{Igor2006Two,Mu2008Self}, 2D studies are of practical value. In Section \ref{sec:preliminaries}, we review the reduced Landau-de Gennes approach for modelling nematic liquid crystals (see \cite{kralj2014order} and  \cite{bisht_epl}), which has been used with success to describe the in-plane NLC profiles in 2D domains or thin 3D geometries. This approach assumes that the important structural details can be described by a 2D approach, and the structural details are invariant along the height of the thin 3D domain. As will be discussed below, these 2D predictions may also survive in 3D scenarios. For example,
    in \cite{han2019trans}, the planar radial and planar polar solutions in a 2D disc can also be extended to a 3D cylinder with z-invariance and in \cite{canevari_majumdar_wang_harris}, the authors show that the 2D WORS (Well Order Reconstruction Solution) also exists in a 3D well with a square cross-section. Of course, the 3D scenario is much richer and cannot be exhaustively described by a reduced 2D approach.
In Section~\ref{sec:distinguished_limit}, we study the stable nematic equilibria for a reduced 2D problem on a regular polygon $E_K$ with $K$ edges, in terms of the edge length, $\lambda$, of the polygon, keeping all other parameters fixed in the study. We first study the $\lambda \to 0$ limit for which the reduced problem is a Dirichlet boundary value problem for the Laplace equation on a regular polygon. We use the Schwarz-Christoffel mapping to map a disc to a polygon, solve the corresponding boundary-value problem on a disc, study the limiting unique solution and its rotation/reflection symmetries analytically and label the limiting profile as the new \emph{Ring} solution, which depends on the number of edges, $K$, of a regular polygon $E_K$.  In this limit, we can accurately capture the structure and location of the optical defect, which is mathematically identified with the zero set of the reduced solution.

The optical defect of the ring solution has the profile of a $-1/2$ defect for a triangle, is a pair of mutually orthogonal lines for a square and has the profile of a $+1$-degree Ginzburg-Landau vortex for $K>4$.  In Section~\ref{subsec:infty_limit}, we present some heuristics for the number of stable reduced equilibria in the $\lambda \to \infty$ limit (analogous to Type II superconductors in the GL theory); a simple estimate shows that there are at least $K\choose 2$ stable states which can be analytically computed by solving an associated boundary-value problem for a scalar function.

In Section~\ref{subsec:numerics}, we use both sets of analytic results to compute initial conditions for numerical solvers and use continuation methods to compute bifurcation diagrams for the reduced equilibria on a pentagon and  a hexagon, as illustrative examples. These two examples highlight certain generic differences between polygons with even and odd numbers of sides. As K increases, we have at least [K/2] classes of stable equilibria, distinguished by the locations of a pair of fractional point defects. Each point defect is either pinned at or near a polygon vertex and the different stable states are generated by different defect locations. We do not have good estimates for the number of unstable states, but we do find BD solutions (see \cite{wang2019order} for the origin of the name) in the cases of a pentagon and hexagon, which are unstable equilibria with approximate interior line defects or interior lines of low order. Numerically, when $\lambda$ is small, the BD solutions are index 1 saddle points of the reduced LdG energy that can connect stable equilibria. Whilst our numerical studies are not exhaustive, it is clear that the unstable states are also generated by the symmetries of the polygons and we can build a hierarchy of unstable states and their unstable directions by exploiting the geometry of the problem. As $K \to \infty$, the number of stable states increases rapidly but the stability is closely connected to the curvature of the boundary. For a completely smooth boundary e.g. disc, we lose the rich solution landscape of $E_K$ with $K$ large. In fact, for a disc, in the $R\to \infty$ limit of large radius, we only have the planar polar equilibria featured by two interior nematic point defects along a disc diameter \cite{hu2016disclination, han2019trans}, for appropriately defined boundary conditions. The number of edges, the length of the polygon edge and the sharpness of the polygon vertices give us a diverse set of stable equilibria profiles and precise control on the number and location of defects for new experimental and theoretical studies. We present our conclusions in Section~\ref{sec:conclusion}.

\section{Theoretical Framework}
\label{sec:preliminaries}
The LdG theory is a powerful continuum theory for nematic liquid crystals and
describes the nematic state by a macroscopic order parameter--the LdG $\Q$-tensor,
which is a measure of nematic orientational order. Mathematically, the $\Q$-tensor is a symmetric traceless $3\times$3 matrix i.e.
\begin{equation}
    \Q\in S_0:=\{\Q\in \mathbb{M}^{3\times 3}: Q_{ij} = Q_{ji},Q_{ii} = 0\}\nonumber
\end{equation}
A $\Q$-tensor is said to be (i) isotropic if~$\Q=0$, (ii) uniaxial if $\Q$
has a pair of degenerate non-zero eigenvalues and (iii) biaxial if~$\Q$ has three distinct eigenvalues~\cite{dg}. A
uniaxial $\Q$-tensor can be written in terms of its ``order parameter" and ``director" as follows -  $\Q_u = s \left(\n \otimes \n - \I/3\right)$ with~$\I$ being the $3\times 3$ identity matrix,
$s$ is real and~$\n\in \mathbb{S}^2$, a unit vector. The vector, $\n$, is the eigenvector with the non-degenerate eigenvalue, known as the ``director'' and models the single preferred direction of uniaxial nematic alignment at every point in space~\cite{virga,dg}. The scalar, $s$, is known as the order parameter, which measures the degree of orientational order about $\n$.

In the absence of surface energies, a particularly simple form of the LdG energy is given by
\begin{equation}
    I_{LdG}[\Q]:=\int \frac{L}{2}|\nabla\Q|^2 + f_B\left(\Q\right) \mathrm{dA},\label{eq:3Denergy}
\end{equation}
where
\begin{equation}
    |\nabla\Q|^2:=\frac{\partial Q_{ij}}{\partial r_k}\frac{\partial Q_{ij}}{\partial r_k}, f_B\left(\Q\right):=\frac{A}{2}tr\Q^2-\frac{B}{3}tr\Q^3+\frac{C}{4}\left(tr\Q^2\right)^2.
    \label{eq:fB}
\end{equation}
The variable $A = \alpha\left(T-T^*\right)$ is a rescaled temperature, $\alpha, L, B, C >0$ are material-dependent constants, and $T^*$ is the characteristic nematic supercooling temperature. Further $\r:=\left(x,y, z\right)$, $tr\Q^2 = Q_{ij}Q_{ij}$ and $tr\Q^3 = Q_{ij}Q_{jk}Q_{ki}$ for $i,j,k = 1,2,3$.
The rescaled temperature $A$ has three characteristic values:(i)$A = 0$, below which the isotropic phase $\Q = 0$ loses stability, (ii) the nematic-isotropic transition temperature, $A = B^2/27C$, at which $f_B$ is minimized by the isotropic phase and a continuum of uniaxial states with $s = s_+ = B/3C$ and $\n$ arbitrary, and (iii) the nematic superheating temperature, $A = B^2/24C$ above which the isotropic state is the unique critical point of $f_B$.

 For a given $A<0$, let $\mathscr{N}:=\{\Q \in S_0:\Q=s_+\left(\n\otimes \n-\I/3\right)\}$ denote the set of minima of the bulk potential, $f_B$ with
\begin{equation}
    s_+:=\frac{B+\sqrt{B^2+24|A|C}}{4C}\nonumber
\end{equation}
and $\n\in S^2$ arbitrary. In particular, this set is relevant to our choice of Dirichlet conditions for boundary-value problems in what follows.
  The size of defect cores is typically inversely proportional to $s_+$ for low temperatures $A<0$.  Following \cite{wojtowicz1975introduction}, we use MBBA as a representative NLC material and use its reported values for $B$ and $C$ to fix $B=0.64\times10^4 N/m^2$ and $C=0.35\times10^4 N/m^2$ throughout this manuscript.

We use the one-constant approximation in (\ref{eq:fB}), so that the elastic energy density simply reduces to the Dirichlet energy density $|\nabla \Q|^2$. In general, the elastic energy density has different contributions from different deformation modes e.g. splay, twist and bend, and the elastic anisotropy can be strong for polymeric materials \cite{wensink2019polymeric}. However, the one-constant approximation assumes that all deformation modes have comparable energetic penalties i.e. equal elastic constants and this is a good approximation for some characteristic NLC materials such as MBBA \cite{dg},\cite{virga1995variational}, which makes the mathematical analysis more tractable.

We model nematic profiles on three-dimensional wells, whose cross section is a regular two-dimensional polygon $\Omega$, in the limit of vanishing depth, building on a batch of papers on square and rectangular domains \cite{canevari2017order, wang2019order,canevari_majumdar_wang_harris, kralj2014order}. More precisely, the domain is
\begin{equation}
\label{eq:domain}
\mathcal{B} = \Omega \times \left[0, h \right].
\end{equation}  $\Omega$ is a regular rescaled polygon, $E_K$, for example $E_6$ in Figure \ref{domain},  with $K$ edges, centered at the origin with vertices
\begin{equation}
    w_k =  \left(\cos\left(2\pi \left(k-1\right)/K\right),\sin\left(2\pi \left(k-1\right)/K\right)\right),\ k = 1,...,K.\nonumber
\end{equation}
We label the edges counterclockwise as $C_1, ..., C_K$, starting from $\left(1,0\right)$.
We work in the $h \to 0$ limit i.e. the thin film limit. Informally speaking, we impose Dirichlet uniaxial tangent boundary conditions on the lateral surfaces, which require the corresponding uniaxial director, $\mathbf{n}$, to be tangent to the lateral surfaces, and impose surface energies, $f_s$, on the top and bottom surfaces, which favour planar degenerate boundary conditions or equivalently constrain the nematic directors to be in the plane of the cross-section without a fixed direction. The Dirichlet conditions on the lateral sides are consistent with the tangent boundary conditions on the top and bottom surfaces.

In the $h \to 0$ limit and for certain choices of the surface energies, we can rigorously justify the reduction from the three-dimensional domain $\mathcal{B}$ to the two-dimensional domain $\Omega$ in (\ref{eq:domain}) \cite{Golovaty2015Dimension}. Firstly, we non-dimensionalize the system as, $\bar{\r} =  \left(\frac{x}{\lambda}, \frac{y}{\lambda}, \frac{z}{h} \right)$,
where $\lambda$ is the edge length of the regular polygon. We impose a Dirichlet boundary condition, $\Q_b$, on the lateral surfaces, $\partial \Omega \times \left[0, 1 \right]$ and assume that: 
\begin{equation} \label{BC}
 \Q\left(x, \, y, \, z\right) = \Q_b\left(x, \, y\right) \qquad \textrm{for }
 \left(x, \, y\right)\in\partial\Omega, \ z\in\left(0, \, 1\right) \qquad \textrm{and}
\end{equation}
\begin{equation} \label{hp-BC}
 \z \textrm{ is an eigenvector of } \Q_b\left(x, \, y\right)  \qquad\nonumber
 \textrm{ for any } \left(x, \, y\right)\in\partial\Omega\times \left(0, \, 1\right).\nonumber
\end{equation}
Then one can show (also see \cite{wang2019order}) that in the $\sigma = \frac{h}{\lambda}\to 0$  limit, minima of the Landau-de Gennes energy (\ref{eq:3Denergy})
 subject to the boundary condition~\eqref{BC} converge (weakly in~$H^1$)
 to minima of the reduced functional
\begin{equation}
\label{eq:reduced}
    F_0[\Q] := \int_{\Omega} \left(\frac{1}{2}\left|\nabla_{x,y}\Q\right|^2
    + \frac{\lambda^2}{L} f_B\left(\Q\right)\right) \mathrm{dA}
\end{equation}
 \emph{subject to the constraint} that
 \begin{equation} \label{constraint}
  \z \textrm{ \emph{is an eigenvector of} } \Q\left(x, \, y\right)  \qquad\nonumber
  \textrm{ for any } \left(x, \, y\right)\in\Omega\nonumber
 \end{equation}
 and to the boundary condition
 \[
  \Q = \Q_b \qquad \textrm{on } \partial\Omega.
 \]

Using the reasoning above, we restrict ourselves to $\Q$-tensors with $\z$ as a fixed eigenvector (this utilises two degrees of freedom for the allowed eigenvectors) and study critical points or minima of (\ref{eq:reduced}) with three degrees of freedom as shown below. 
\begin{equation}
    \begin{aligned}
        \Q\left(x,y\right) &= q_1\left(x,y\right)\left(\x\otimes\x-\y\otimes\y\right) + q_2\left(x,y\right)\left(\x\otimes\y+\y\otimes\x\right)\\
        &+ q_3\left(x,y\right)\left(2\z\otimes\z-\x\otimes\x-\y\otimes\y\right)
    \label{eq:Q}
    \end{aligned}
\end{equation}
where $\x= \left(1,0,0\right)$, $\y = \left(0,1,0\right)$ and $\z = \left(0,0,1\right)$.
Informally speaking, $q_1$ and $q_2$ measure the degree of ``in-plane" order, $q_3$  measures the ``out-of-plane" order and $\Q$ is invariant in the $z$-direction.
This constraint naturally excludes certain solutions such as the stable escaped (E) solution in a cylinder with large radius in \cite{shams2012theory}, for which the $z$-invariance does not hold.
In \cite{canevari_majumdar_wang_harris}, the authors compute bounds for $q_3$ as a function of the re-scaled temperature. In particular, they show that for $A = -\frac{B^2}{3C}$, $q_3$ is necessarily a constant so that critical points of the form (\ref{eq:Q}) only have two degrees of freedom, which makes the mathematical analysis more tractable. For arbitrary $A<0$, LdG critical points of the form (\ref{eq:Q}), subject to the Dirichlet boundary condition $\Q_b \in \mathscr{N}$, would have non-constant $q_3$ profiles and whilst we conjecture that some qualitative solution properties are universal for $A<0$, a non-constant $q_3$ profile would introduce new technical difficulties that would distract from the main message. A further benefit is that whilst we present our results in a 2D framework, these reduced critical points survive for all $h>0$ (beyond the thin-film limit) although they may not be physically relevant or energy-minimizing outside the thin-film limit (\cite{canevari2017order} and \cite{wang2019order}).

From \cite{canevari_majumdar_wang_harris}, for $A = -B^2/3C$, we necessarily have $q_3 = -\frac{B}{6C}$ and for all $\lambda>0$, the study of $\Q$ in (\ref{eq:Q}) is reduced to a symmetric, traceless $2\times 2$ matrix $\P$ given below -
\begin{equation}
    \P =
    \left(\begin{tabular}{cc}
    $P_{11}$&$P_{12}$\\
    $P_{12}$&$-P_{11}$\\
\end{tabular}\right).
\nonumber
\end{equation}
The relation between $\Q$ and $\P$ is
\begin{equation}
    \Q =
    \left(\begin{tabular}{cc|c}
        \multicolumn{2}{c|}{\multirow{2}*{$\P\left(\r\right)+\frac{B}{6C}\I_2$}} & $0$ \\
    \multicolumn{2}{c|}{} & $0$ \\
    \hline
    $0$ & $0$ & $-B/3C$ \\
    \end{tabular}\right).
    \label{eq:QP}
\end{equation}
    Therefore, the energy in (\ref{eq:reduced}) is reduced to
    \begin{equation}
        F[P]: = \int_{\Omega}\frac{1}{2}|\nabla P|^2+\frac{\lambda^2}{L}\left(-\frac{B^2}{4C}tr\P^2+\frac{C}{4}\left(tr\P^2\right)^2\right) \mathrm{d A},
        \label{p_energy}
    \end{equation}
and the corresponding Euler-Lagrange equations are
\begin{equation}
    \begin{aligned}
        \Delta P_{11} &= \frac{2C\lambda^2}{L}\left(P_{11}^2+P_{12}^2-\frac{B^2}{4C^2}\right)P_{11},\\
        \Delta P_{12} &= \frac{2C\lambda^2}{L}\left(P_{11}^2+P_{12}^2-\frac{B^2}{4C^2}\right)P_{12}.\\
\end{aligned}
    \label{Euler_Lagrange}
\end{equation}
We can also write $\P$ in terms of an order parameter $s$ and an angle $\gamma$ as shown below -
\begin{equation}
    \P = 2s\left(\n\otimes\n-\frac{1}{2}\I_2\right),
    \label{P}
\end{equation}
where $\n = \left(\cos\gamma,\sin\gamma\right)^T$ and $I_2$ is the $2\times 2$ identity matrix.
so that
\[P_{11} = s\cos\left(2\gamma\right),\ P_{12} = s\sin\left(2\gamma\right).\]
\begin{figure}
    \begin{center}
         \includegraphics[width=0.3\columnwidth]{./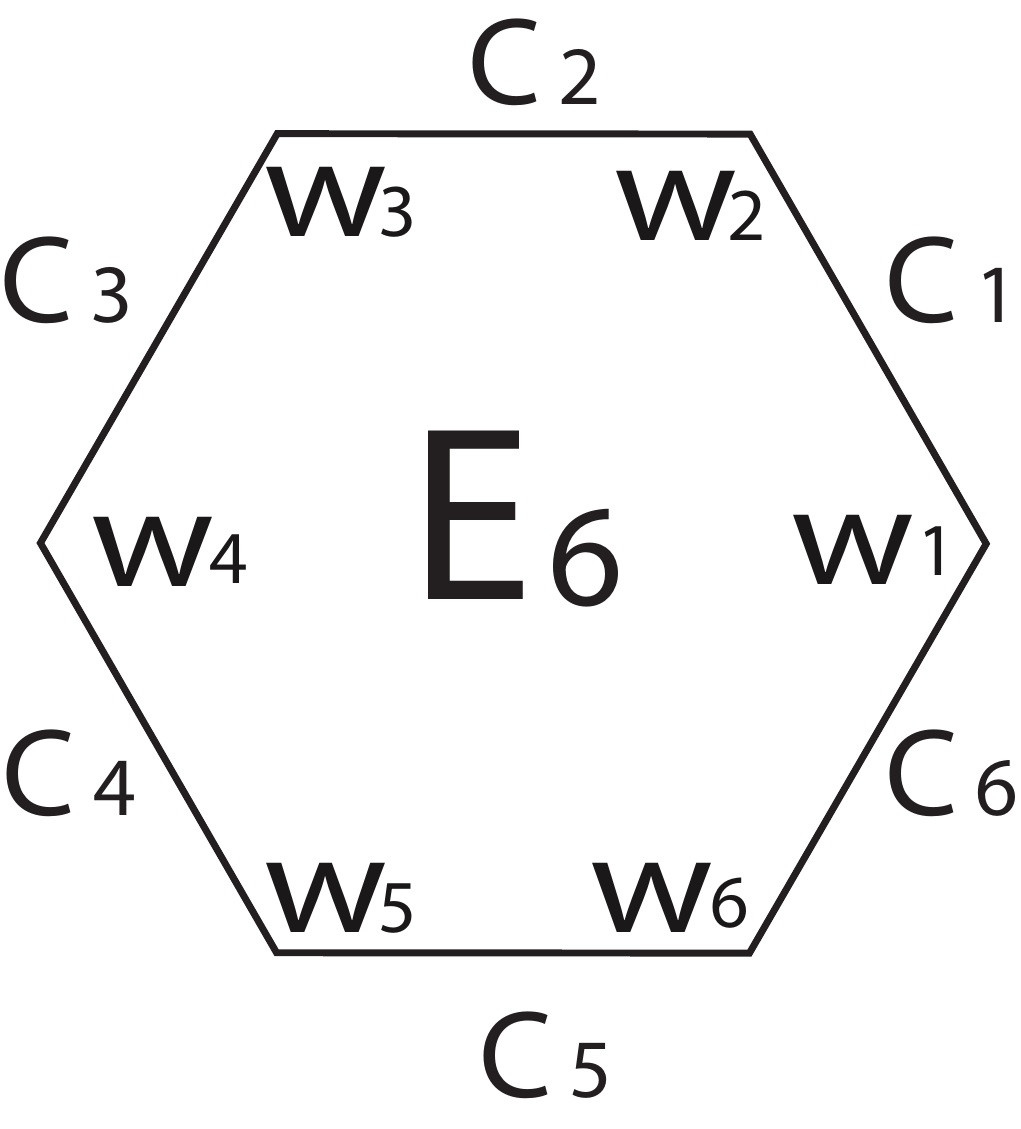}
        \caption{The regular rescaled hexagon domain $E_6$.}
        \label{domain}
    \end{center}
\end{figure}
   We briefly remark on the biaxiality parameter, $\beta(\Q) = 1-6\frac{tr(\Q^3)^2}{tr(\Q^2)^3}$ \cite{mkaddem2000fine}, where $\beta(\Q)\in[0,1]$ and $\beta(\Q) = 0$ for the uniaxial case. We can recover biaxiality in this reduced framework by using the relation between $\P$ and $\Q$ in (\ref{eq:QP}). When $\P=0$, the eigenvalues of $\Q$ are $(B/6C,B/6C,-B/3C)$ and $\beta(\Q)=0$ i.e. the nodal set of $\P$ defines a uniaxial set of $\Q$ with negative order parameter.

Next, we specify Dirichlet boundary conditions for $\P$ on $\partial E_K$. We work with tangent boundary condition on $\partial E_K$ which requires $\n$ in (\ref{P}) to be tangent to the edges of $E_K$, constraining the values of $\gamma$ on $\partial E_K$. However, there is a necessary mismatch at the corners/vertices.
We define the distance between a point on the boundary and the vertices as
\begin{equation}
    dist\left(w\right) = min\{||w-w_k||_2,k = 1,...,K\},\ w\ on\ \partial E_K.\nonumber
\end{equation}
We define the Dirichlet boundary condition $\P = \P_b$ on the segments of edges, far from the corners, as
\begin{equation}
    \begin{aligned}
    &P_{11b}\left(w\right) = \alpha_k = -\frac{B}{2C}\cos\left(\frac{\left(2k-1\right)2\pi}{K}\right),\ dist\left(w\right)>\epsilon, w\ on\ \partial E_K,\\
    &P_{12b}\left(w\right) = \beta_k = -\frac{B}{2C}\sin\left(\frac{\left(2k-1\right)2\pi}{K}\right),\ dist\left(w\right)>\epsilon, w\ on\ \partial E_K,
    \end{aligned}
    \label{Pb}
\end{equation}
   where $0<\epsilon \ll 1/2$ is the size of mismatch region. Recalling $\Q_b$ in (\ref{BC}), we have
\[
\Q_b = \P_b  - \frac{B}{6C}\left(2 \z\otimes \z - \x\otimes \x -
\y\otimes \y \right)
\]
which defines a Dirichlet uniaxial boundary condition, $\beta(Q_b)=0$, that is a minimizer of the bulk potential $f_B$ in (\ref{eq:fB}). At each vertex, we set $\P_b$ to be equal to the average of the two constant values on the two intersecting edges at the vertex under consideration. On the $\epsilon$-neighbourhood of the vertices, we linearly interpolate between the constant values in (\ref{Pb}) and the average value at the vertex and for $\epsilon$ sufficiently small, the choice of the interpolation does not change the qualitative solution profiles. In the next sections, we study minima of (\ref{p_energy}) as a function of $\lambda$, using a combination of analytic and numerical tools, with the hexagon as an illustrative example.

\section{Distinguished Limits}
\label{sec:distinguished_limit}
There is one parameter in the reduced energy (\ref{p_energy}) proportional to
\[
\bar{\lambda}^2 = \frac{2C \lambda^2}{L},
\]
 which is effectively the square of the ratio of two length scales, $\lambda$ and $\sqrt{\frac{L}{C}}$. Since we work at a fixed temperature, $A = -\frac{B^2}{3C}$ and we treat $B$, $C, L$ to be fixed material dependent constants, it is clear that $\frac{L}{C}$ is proportional to $\xi^2 = \frac{L}{|A|}$, where $\xi$ is a material-dependent and temperature-dependent characteristic length scale \cite{kralj2014order}. The length scale, $\xi$, is often referred to as the nematic correlation length and is typically associated with defect core sizes. The nematic correlation length is typically in the range of a few tens to hundreds of nanometers \cite{virga}.

We study two distinguished limits analytically in what follows - the $\bar{\lambda} \to 0$ limit is relevant for nano-scale domains $\Omega$, and the $\bar{\lambda}\to \infty$ limit, which is the macroscopic limit relevant for micron-scale or larger cross-sections $\Omega$. We present rigorous results for limiting problems below but our numerical simulations show that the limiting results are valid for non-zero but sufficiently small $\bar{\lambda}$ (or even experimentally accessible nano-scale geometries depending on parameter values) and sufficiently large but finite $\bar{\lambda}$ too. In other words, these limiting results are of potential practical value too.
We treat $C$ and $L$ as fixed constants in this manuscript and hence, the $\bar{\lambda}\to 0$ and $\bar{\lambda}\to\infty$  limits are equivalent to the $\lambda \to 0$ and $\lambda\to\infty$ limits respectively. In the following, we drop the bar over $\lambda$ for brevity.

\subsection{The $\lambda\to 0$ Limit}
\label{subsec:0_limit}
We can use Lemma 8.2 of \cite{lamy2014bifurcation} to deduce that there exists a $\lambda_0\left(B, C, L\right) > 0$ such that, for any $\lambda <\lambda_0\left(B, C, L\right)$, the system (\ref{Euler_Lagrange}) has a unique solution which is the unique minimizer
of the reduced energy in (\ref{p_energy}).

In \cite{kralj2014order} and \cite{canevari2017order}, the authors report the Well Order Reconstruction Solution (WORS) on a square domain, for all $\lambda >0$. The WORS is represented by a $\Q$-tensor of the form
\[
\Q_{WORS} = q \left(\x\otimes \x - \y\otimes \y \right) - \frac{B}{6C}\left(2\z\otimes \z - \x\otimes \x - \y\otimes \y \right)
\]
where $q$ is a scalar function such that $q=0$ along the square diagonals. Mathematically speaking, this implies that the $\Q_{WORS}$ is strictly uniaxial with negative order parameter along the square diagonals which would manifest as a pair of orthogonal defect lines in experiments. The WORS is globally stable for small $\lambda$ and loses stability as $\lambda$ increases. Numerical experiments suggest that the WORS acts as a transition state between experimentally observable equilibria for large $\lambda$.

It is natural to study the counterparts of the WORS on arbitrary regular two-dimensional polygons, $E_K$, and in particular study the zero set of the corresponding $\P$ matrix in (\ref{eq:QP}). Namely, is the zero set of $\P$ a set of intersecting lines as in the WORS or it is a lower-dimensional set of discrete or unique points? We address this question below by means of an explicit analysis of the limiting problem with $\lambda = 0$.

We define the limiting problem for $\lambda = 0$ to be
\begin{equation}
\begin{aligned}
    &\Delta P_{11}^0 = 0,\ \Delta P_{12}^0 = 0, on\ \Omega,\\
    &P_{11}^0 = P_{11b},\ P_{12}^0 = P_{12b},\ on\ \partial\Omega.
\end{aligned}
    \label{zero_euler}
\end{equation}
We can adapt methods from \cite{bethuel1994ginzburg} and from Proposition $3.1$ of \cite{fang2019solution}, we have that minima, $\left(P_{11}^{\lambda},P_{12}^{\lambda}\right)$, of (\ref{p_energy}) subject to the fixed boundary conditions $\P_b$ in (\ref{Pb}) (for $\epsilon$ sufficiently small) converge uniformly to the unique solution $\left(P_{11}^0,P_{12}^0\right)$ of (\ref{zero_euler}) as $\lambda \to 0$ i.e.
\begin{equation}
    |P_{11}^{\lambda}-P_{11}^0|_{\infty}\leq C\lambda^2,\ |P_{12}^{\lambda}-P_{12}^0|_{\infty}\leq C\lambda^2,
    \label{eq:error_estimates}
\end{equation}
for $C$ independent of $\lambda$.
Therefore, in the $\lambda\to 0$ limit, it suffices to study the boundary-value problem for the Laplace equation in (\ref{zero_euler}) on regular polygons.
\subsubsection{Solving Laplace equation with Dirichlet boundary conditions on a regular polygon domain}
\begin{figure}
    \begin{center}
        \includegraphics[width=0.8\columnwidth]{./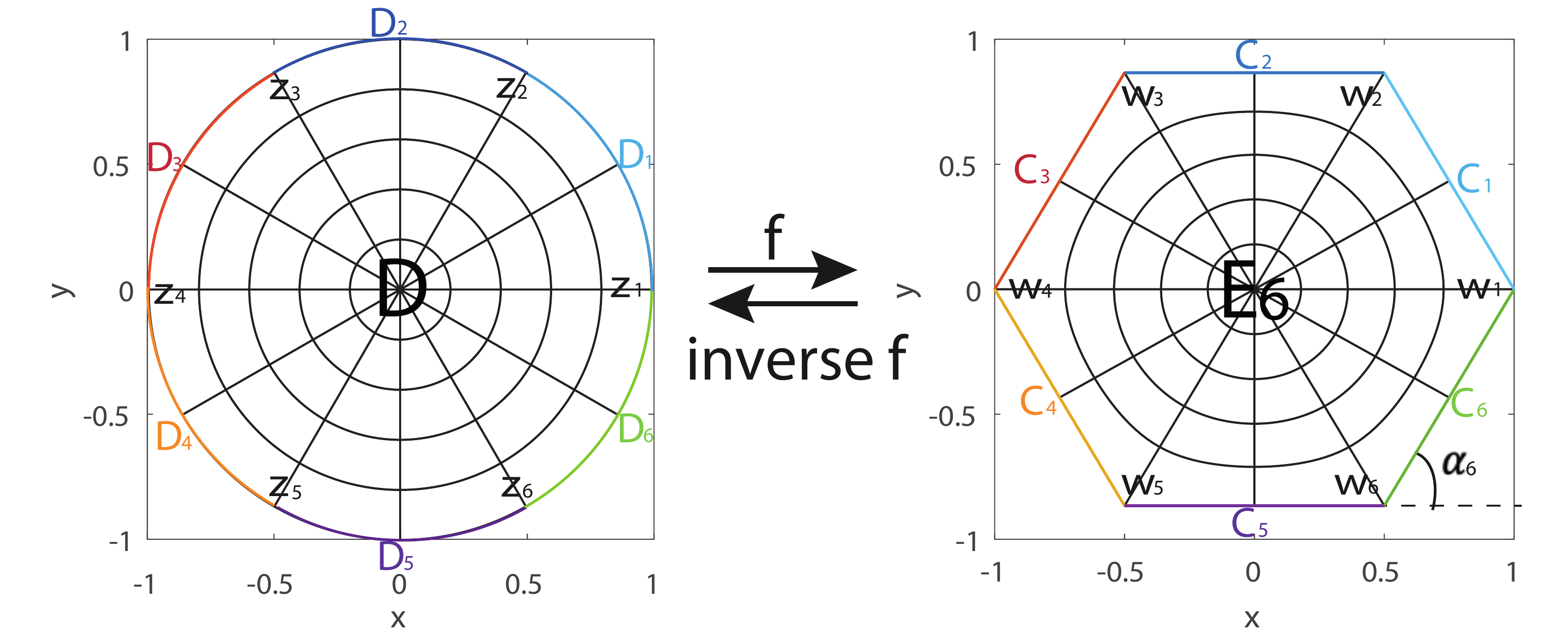}
        \caption{Schwarz-Christoffel mapping $f$ from a unit disc to a regular hexagon and inverse mapping $f^{-1}$ from a regular hexagon to a unit disc.}
        \label{circle_6_hexagon}
    \end{center}
\end{figure}
Our strategy is to map the Dirichlet boundary-value problem (\ref{zero_euler}) on $\Omega = E_K$ (a regular polygon with $K$ edges) to an associated Dirichlet boundary-value problem on the unit disc $D$ in Figure \ref{circle_6_hexagon}, for which the solution can be easily computed by the Poisson Integral \cite{krantz2012handbook}.
In complex analysis, a Schwarz-Christoffel mapping is a conformal transformation, $f: D \to E_K$ of the disc (upper half-plane) onto the interior of any simple polygon (the boundary of the polygon does not cross itself) \cite{driscoll2002schwarz}, such that $f\left(D\right) = E_K$. Let $w = f\left(z\right)$. We require that $f\left(z_k\right) = w_k = e^{i 2\pi (k-1)/K}$, $f\left(0,0\right) = \left(0,0\right)$ and $f^{-1}\left(w_1\right) = z_1 = \left(1, 0 \right)$. Then $z_k =e^{i 2\pi (k-1) /K}$ and exterior angles of the $E_K$ along $C_{k-1}$ and $C_{k}$ are $\alpha_k=\frac{2\pi}{K}$, for $k = 1,...,K$.
The Schwarz-Christoffel mapping is uniquely determined as
\cite{brilleslyper2012explorations}
\begin{equation}
    f\left(z\right) = C_1\left(K\right)\int_0^z\frac{1}{\left(1-x^K\right)^{2/K}}\rm{d}x\nonumber
\end{equation} with
\begin{equation}
    C_1\left(K\right) = \frac{\Gamma\left(1-1/K\right)}{\Gamma\left(1+1/K\right)\Gamma\left(1-2/K\right)}.\nonumber
\end{equation}
The Taylor series representation of $f\left(z\right)$ is
\begin{equation}
    w = f\left(z\right) = C_1\left(K\right)\sum_{n = 0}^{\infty}{n-1+2/K\choose n}\frac{z^{1+nK}}{1+nK}.\nonumber
\end{equation}
The inverse of a conformal mapping, $f$, is also a conformal mapping, $f^{-1}$.
The conformal mapping, $f$, from a unit disc onto a regular hexagon and the inverse mapping, $f^{-1}$, from a regular hexagon to a unit disc, as example, is shown in Figure \ref{circle_6_hexagon}. One can check that $f$ maps the circle, $\partial D$, onto the polygon boundary, $\partial E_K = f\left(\partial D\right)$.\\
We define the disc boundary segments as
\begin{equation}
    D_k: = \{z = e^{i\theta}, 2\pi \left(k-1\right)/K \leq \theta < 2\pi k/K\},\ k = 1,...,K.\nonumber
\end{equation}
Then we can check that
\begin{equation}
    f\left(D_k\right) = C_k,\ f\left(\rho e^{\pi ki/K}\right) = \lambda e^{\pi ki/K},\ k = 1,...,K,\nonumber
\end{equation}
where $C_k$ is the $k$-th edge of $E_K$ and the last relation comes from
\begin{align}
    f\left(\rho e^{\pi ki/K}\right) &= C_1\left(K\right)\sum_{n = 0}^{\infty}{n-1+2/K\choose n}\frac{e^{\pi ki/K}e^{nki \pi}}{1+nK}\nonumber\\
    &= e^{\pi ki/K}C_1\left(K\right)\sum_{n = 0}^{\infty}{n-1+2/K\choose n}\frac{\left(-1\right)^{nk}}{1+nK}\nonumber\\
    &= \lambda e^{\pi ki/K},\nonumber
\end{align}
since $C_1\left(K\right)$ is real. $f$ is well defined on $\overline{D}$ and analytic in $\overline{D}/\ \{z_1,...,z_K\}$, whereas it is not smooth at $z_1,...z_K$ because there is a jump of $arg\frac{1}{\left(x-z_k\right)^{\alpha_k/\pi}}$\cite{driscoll2002schwarz}.
$f$ can be extended continuously to $\overline{D}$ at each $z_k$.

In complex analysis, let $u:U\rightarrow\mathbb{R}$ be a harmonic function in a neighborhood of the closed disc $\overline{D}\left(0,1\right)$, then for any point $z_0 = \rho e^{i\phi}$ in the open disc $D\left(0,1\right)$,
\begin{equation}
    u\left(\rho e^{i\phi}\right) = \frac{1}{2\pi}\int^{2\pi}_{0} u\left(e^{i\theta}\right)\frac{1-\rho^2}{1-2\rho \cos\left(\theta-\phi\right)+\rho^2}\mathrm{d\theta}.\nonumber
\end{equation}
If the Dirichlet boundary condition is piecewise constant (as in our case with $\epsilon =0$) on the segments $D_k$,
\begin{equation}
    u\left(\rho e^{i\phi}\right)= \frac{1}{2\pi}\sum_{k = 1}^{K} \int_{D_k} d_k\frac{1-\rho^2}{1-2\rho \cos\left(\theta-\phi\right)+\rho^2} \mathrm{d\theta} = \frac{1}{\pi}\sum_{k=1}^{K} d_kS_k\left(\rho e^{i\phi}\right),
    \label{Poisson_Integral}
\end{equation}
where $d_k$ is the constant boundary value on $C_k$ and $D_k$. To calculate $S_k$, we need to compute the integral
\begin{equation}
    I = \int\frac{1}{1+\rho^2-2\rho \cos x} \mathrm{dx}.
\end{equation}
Using a change of variable $t = \tan\frac{x}{2}$, we find that
\begin{equation}
    I = \int\frac{1}{1+\rho^2-2\rho\left((1-t^2)/(1+t^2)\right)}\frac{2\rm{d}t}{1+t^2}
         = \frac{2}{1-\rho^2}\left(\arctan \left(\frac{1+\rho}{1-\rho}\tan\frac{x}{2}\right)+const \right)\nonumber
\end{equation}
If the angle $2\pi \left(k-1\right)/K-\phi\leq \left(2n+1\right)\pi < 2\pi k/K-\phi$, $n \in \mathbb{Z}$, $k = 1,...,K$, $S_k = \int_{2\pi\left(k-1\right)/K}^{2\pi k/K}$ is an improper integral
\cite{ma}
and
\begin{equation}
    \begin{aligned}
        S_{k}\left(\rho e^{i\phi}\right) &= \frac{1-\rho^2}{2}\left(I|_{x = 2\pi k/K-\phi}-I|_{x\to \left(2n+1\right)\pi^+} \right.\\
        & \left. + I|_{x\to \left(2n+1\right)\pi^-}-I|_{x = 2\pi \left(k-1\right)/K-\phi}\right)\\
        & = \arctan\left(\frac{1+\rho}{1-\rho}\tan\frac{2\pi k/K-\phi}{2}\right)\\
        & - \arctan\left(\frac{1+\rho}{1-\rho}\tan\frac{2\pi \left(k-1\right)/K-\phi}{2}\right)+\pi\\
    \end{aligned}
    \label{Sk_1}
\end{equation}
otherwise,
\begin{equation}
    \begin{aligned}
        &S_{k}\left(\rho e^{i\phi}\right) = \frac{1-\rho^2}{2}\left(I|_{x = 2\pi k/K-\phi}-I|_{x = 2\pi \left(k-1\right)/K+\phi}\right)\\
         &= \arctan\left(\frac{1+\rho}{1-\rho}\tan\frac{2\pi k/K-\phi}{2}\right) - \arctan\left(\frac{1+\rho}{1-\rho}\tan\frac{2\pi \left(k-1\right)/K-\phi}{2}\right)
    \end{aligned}
    \label{Sk_2}
\end{equation}
Equation (\ref{Poisson_Integral}) is Poisson Integral on unit disc and $u\left(z\right)$ is a harmonic function of $z$ on the unit disc $D$. If we consider the conformal mapping, $z = f^{-1}\left(w\right)$, then $U\left(w\right) = u\left(f^{-1}\left(w\right)\right)$ is a harmonic function of $w$ on $E_K$, subject to specified Dirichlet conditions on the edges $C_K$ of $E_K$.
The proof can be found in Proposition 6.1 of \cite{olver2017complex}.

\subsubsection{Ring Solutions for $\lambda=0$}
We can use the Poisson formula in Equation~(\ref{Poisson_Integral}) to explicitly compute the solution of the boundary-value problem (\ref{zero_euler}). In the $\epsilon \to 0$ limit, the solution of (\ref{zero_euler}) converges uniformly to the solution of
the boundary-value problem below, with piecewise constant boundary conditions
\begin{equation}
\begin{aligned}
    &\Delta P_{11}\left(\r\right)  = 0,\ \r\in E_K,\\
        &\Delta P_{12}\left(\r\right)  = 0,\ \r\in E_K,\\
        &P_{11}\left(\r\right)  = \alpha_k =-\frac{B}{2C}\cos\left(\left(2k-1\right)2\pi/K\right),\ \r\ on\ C_k,\ k = 1,..., K.\\
        &P_{12}\left(\r\right)  = \beta_k = -\frac{B}{2C}\sin\left(\left(2k-1\right)2\pi/K\right),\ \r\ on\ C_k,\ k = 1,...,K.
\end{aligned}
\label{P11_P12_laplace}
\end{equation}
For simplicity, we focus on the boundary-value problem, (\ref{P11_P12_laplace}) with piecewise constant boundary conditions.
\begin{proposition} Let $\left(P_{11}, P_{12} \right)$ be the unique solution of (\ref{P11_P12_laplace}) and let
    \begin{equation}
        G_K:=\{S\in O\left(2\right):SE_K\in E_K\},
        \label{sym-set}
    \end{equation}
be a set of symmetries consisting of
$K$ rotations by angles $2\pi k/K$ for $k = 1,...,K$ and $K$ reflections about the symmetry axes ($\phi=\pi k/K$, $k = 1,...,K$) of the polygon $E_K$.\\
$P_{11}^2+P_{12}^2$ is invariant under $G_K$.
    If $\left(P_{11},P_{12}\right)\neq (0,0)$, then
    $\frac{\left(P_{11},P_{12}\right)}{\sqrt{P_{11}^2+P_{12}^2}}$ undergoes a reflection about the symmetry axes of the polygon and rotates by $4\pi k/K$ under rotations of angle $2\pi k/K$ for $k = 1,...,K$.
    \label{symmetry_proposition}
\end{proposition}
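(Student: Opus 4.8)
The plan is to combine the equivariance of the Dirichlet problem (\ref{P11_P12_laplace}) under the dihedral group $G_K$ with the uniqueness of its solution, keeping in mind that $(P_{11},P_{12})$ are the coordinates of the symmetric traceless $2\times 2$ tensor $\P=P_{11}(\x\otimes\x-\y\otimes\y)+P_{12}(\x\otimes\y+\y\otimes\x)$, which transforms as a spin-$2$ object under $O(2)$. Concretely, I would first record the transformation rule: for $S\in O(2)$, the linear map $\P\mapsto S\P S^T$ on symmetric traceless $2\times 2$ matrices corresponds, in the $(P_{11},P_{12})$-plane, to rotation by $2\psi$ when $S=R_\psi$ is the rotation by $\psi$, and to the reflection about the line at angle $2\theta_0$ when $S$ is the reflection of $\mathbb{R}^2$ about the line at angle $\theta_0$. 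This is immediate from the parametrization (\ref{P}), $P_{11}=s\cos2\gamma$, $P_{12}=s\sin2\gamma$: rotating the director $\n$ by $\psi$ sends $\gamma\mapsto\gamma+\psi$, hence $2\gamma\mapsto2\gamma+2\psi$, while reflecting $\n$ about the line at angle $\theta_0$ sends $\gamma\mapsto2\theta_0-\gamma$, hence $2\gamma\mapsto4\theta_0-2\gamma$. Note also that $P_{11}^2+P_{12}^2=\tfrac12\,tr\,\P^2$ is a conjugation invariant, so it is unchanged by every $S\in O(2)$ outright.

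Next I would run the symmetry argument. Fix $S\in G_K$ and set $\tilde\P(\r):=S^T\P(S\r)S$, with components $\tilde P_{11},\tilde P_{12}$. Since the Laplacian commutes with orthogonal changes of the independent variable and $\P\mapsto S^T\P S$ is a constant-coefficient linear map, $\tilde P_{11},\tilde P_{12}$ are harmonic on $E_K$. To see that $\tilde\P$ satisfies the same boundary condition, observe that $S$ permutes the edges of $E_K$, say $SC_k=C_{k'}$, and that the datum in (\ref{P11_P12_laplace}) on $C_k$ is precisely the uniaxial tensor $\P_b|_{C_k}=\frac{B}{2C}\big(2\,\n_k\otimes\n_k-\I_2\big)$ whose director $\n_k$ is the unit tangent of $C_k$ — one checks this against the explicit $\alpha_k,\beta_k$, since $2\gamma_k=(2k-1)2\pi/K\pm\pi$ is a tangent direction of $C_k$, with $s=B/2C$. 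Because $S$ acts as a rigid motion, $S^T=S^{-1}$ carries the tangent line of $C_{k'}$ to the tangent line of $C_k$, so $S^T\n_{k'}=\pm\n_k$; as $\n$ enters $\P_b$ quadratically and $S^T\I_2 S=\I_2$, we get $S^T\P_b|_{C_{k'}}S=\P_b|_{C_k}$, i.e. $\tilde\P=\P_b$ on $\partial E_K$. By uniqueness of the solution of (\ref{P11_P12_laplace}), $\tilde\P=\P$, that is
\begin{equation}
\P(S\r)=S\,\P(\r)\,S^T\qquad\text{for all }\r\in E_K,\ S\in G_K.\nonumber
\end{equation}

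Finally I would read off the conclusions using the first step. Taking $S=R_{2\pi k/K}$ in the identity above gives $\big(P_{11}(S\r),P_{12}(S\r)\big)=R_{4\pi k/K}\big(P_{11}(\r),P_{12}(\r)\big)$, and taking $S$ to be a polygon reflection gives the corresponding reflection in the $(P_{11},P_{12})$-plane. In all cases the induced map lies in $O(2)$, so $P_{11}^2+P_{12}^2$ is $G_K$-invariant, and at any point where $(P_{11},P_{12})\neq(0,0)$ the normalized field $(P_{11},P_{12})/\sqrt{P_{11}^2+P_{12}^2}$ rotates by $4\pi k/K$ under the polygon rotation by $2\pi k/K$ and is reflected under each polygon reflection — which is the assertion.

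The step I expect to require the most care is the boundary-condition bookkeeping: one must confirm that the edge permutation $k\mapsto k'$ induced by a rotation or reflection of $E_K$ is matched exactly by the same shift in the pair $(\alpha_k,\beta_k)$ — equivalently, that the angle doubling $\gamma\mapsto2\gamma$ does not destroy equivariance at the level of $(P_{11},P_{12})$ — and that orientation-reversing $S$ are handled correctly (so that a polygon reflection indeed induces a reflection, about twice the axis angle, in the order-parameter plane). A more computational alternative would avoid uniqueness and verify the identities directly from the explicit Poisson-integral solution assembled from the $S_k$ in (\ref{Sk_1})--(\ref{Sk_2}), using the periodicity in $4\pi k/K$ of the relevant arctangent sums; this is elementary but tedious, whereas the uniqueness route is cleaner and also explains transparently why the director-type object turns at twice the geometric rate.
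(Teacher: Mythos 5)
Your proof is correct, but it takes a genuinely different route from the paper's. The paper works on the unit disc via the Schwarz--Christoffel map and verifies the symmetry statements by direct computation with the Poisson integral (\ref{Poisson_Integral}): it derives index-shift and reflection identities for the kernels $S_k$ of (\ref{Sk_1})--(\ref{Sk_2}) and for the boundary constants, namely (\ref{eq:s1})--(\ref{eq:ab}), combines them to obtain the transformation laws (\ref{eq:p11})--(\ref{eq:p12_0}) for $(p_{11},p_{12})$ on the disc, and only then transfers the result to $E_K$ using the fact that $f$ commutes with the rotations and reflections of the polygon. You instead prove the equivariance identity $\P(S\r)=S\P(\r)S^T$ abstractly: the pulled-back field is harmonic, satisfies the same boundary condition because the datum on each edge is the uniaxial tensor built from the edge tangent (hence geometrically equivariant, with the sign ambiguity $S^T\n_{k'}=\pm\n_k$ harmless since $\n$ enters quadratically), and uniqueness of the Dirichlet problem forces equality; the spin-$2$ transformation rule read off from (\ref{P}) then delivers the doubled rotation angle and the reflection statement simultaneously. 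Your route is shorter, avoids the conformal map entirely, and makes the factor of $2$ conceptually transparent. What the paper's explicit computation buys is reusable machinery: the Poisson-integral representation and the $S_k$ identities are exactly what drive the sign analysis of $\partial p_{11}/\partial x$ in Proposition~\ref{isotropic_center}, so that effort is not wasted. Two small points to tighten in your version: (i) the boundary data of (\ref{P11_P12_laplace}) is discontinuous at the vertices, so ``uniqueness'' must be understood for the bounded harmonic (Perron) solution attaining the data at its continuity points --- this is the sense in which the paper's solution is unique, and the vertex values form a null set that does not affect the argument; (ii) when you identify the reflection axes, note that the polygon's reflection axes are $\phi=\pi k/K$, so the induced reflection in the $(P_{11},P_{12})$-plane is about the direction $2\pi k/K$, which is precisely the content of the paper's displayed identities for $p_{11}\left(\rho e^{k\pi i/K-\phi i}\right)$ and $p_{12}\left(\rho e^{k\pi i/K-\phi i}\right)$.
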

\begin{proof}
For convenience, we extend the definition of $S_k$, $\alpha_k$, $\beta_k$, $k = 1,...,K$, to $k\in \mathbb{Z}$ and use the periodicity of $\tan$, $\cos$ and $\sin$ to define
\begin{equation}
        S_{k+nK} = S_k,\ \alpha_{k+nK} = \alpha_k,\ \beta_{k+nK} = \beta_k,\ n\in\mathbb{Z}. \label{eq:periodic}
\end{equation}
From the definitions in (\ref{Sk_1}) and (\ref{Sk_2}),
    \begin{align}
        &S_j\left(\rho e^{i\phi + 2\pi ki/K}\right) = S_{j-k}\left(\rho e^{i\phi}\right),\label{eq:s1}\\
        &S_j\left(\rho e^{-i\phi}\right) = S_{1-j}\left(\rho e^{i\phi}\right),\ j\in\mathbb{Z},\ k\in\mathbb{Z} \label{eq:s2},
    \end{align}
and from the definition of $\alpha_k$ and $\beta_k$ in \ref{P11_P12_laplace}, we have
\begin{equation}
    \begin{aligned}
        \alpha_{j+k} &= \alpha_j\cos\left(\frac{4\pi k}{K}\right)-\beta_j\sin\left(\frac{4\pi k}{K}\right),\\
        \beta_{j+k} &= \beta_j\cos\left(\frac{4\pi k}{K}\right)+\alpha_j\sin\left(\frac{4\pi k}{K}\right),\ j\in\mathbb{Z},\ k\in\mathbb{Z}
    \end{aligned}
    \label{alpha_beta}
\end{equation}
       and
\begin{equation} \label{eq:ab}
    \alpha_j  = \alpha_{1-j}; \quad \beta_j =-\beta_{1-j},\ j\in\mathbb{Z}.
\end{equation}
Let $\left(p_{11}, p_{12}\right)$ be the solution of the Laplace equation on the unit disc, subject to the boundary conditions, $p_{11} = \alpha_k$ and $p_{12} = \beta_k$ on the disc segment $D_k$. From (\ref{Poisson_Integral}), (\ref{eq:s1}) and (\ref{alpha_beta}), we have
\begin{eqnarray}
    p_{11}\left(\rho e^{i\phi+2\pi ki/K}\right) &&= \frac{1}{\pi}\sum_{j = 1}^{K}\alpha_j
    S_j\left(\rho e^{i\phi+2\pi ki/K} \right) = \frac{1}{\pi}\sum_{j = 1-k}^{K-k}\alpha_{j+k}S_j\left(\rho e^{i\phi} \right) \nonumber\\
      && = \frac{1}{\pi}\sum_{j = 1}^K\alpha_j S_j\left(\rho e^{i\phi}\right)\cos\left(\frac{4\pi k}{K}\right)-\frac{1}{\pi}\sum_{j = 1}^K\beta_j S_j\left(\rho e^{i\phi}\right)\sin\left(\frac{4\pi k}{K}\right) \nonumber \\
     &&= p_{11}\left(\rho e^{i\phi}\right)\cos\left(\frac{4\pi k}{K}\right) - p_{12}\left(\rho e^{i\phi}\right)\sin\left(\frac{4\pi k}{K}\right). \label{eq:p11}
\end{eqnarray} Here, we use (\ref{eq:periodic}) to manipulate the limits of the summation above. Similarly,
\begin{eqnarray}
    p_{12}\left(\rho e^{i\phi+2\pi ki/K}\right)
    = p_{12}\left(\rho e^{i\phi}\right)\cos\left(\frac{4\pi k}{K}\right) + p_{11}\left(\rho e^{i\phi}\right)\sin\left(\frac{4\pi k}{K}\right). \label{eq:p12}
\end{eqnarray}
We can use (\ref{eq:p11}) and (\ref{eq:p12}) to check that $p_{11}^2+p_{12}^2=s^2$ is invariant under rotations by multiples of $2\pi k/K$ and
$\frac{\left(p_{11},p_{12}\right)}{\sqrt{p_{11}^2+p_{12}^2}} $ rotates by $4\pi k/K$ under rotations by $2\pi k/K$, $k = 1,...,K.$
Similarly, we can use (\ref{Poisson_Integral}), (\ref{eq:ab}) and (\ref{eq:s2}) to show that
\begin{eqnarray}
    p_{11}\left(\rho e^{-i\phi}\right) &&=\frac{1}{\pi}\sum_{j = 1}^{K}\alpha_jS_j\left(\rho e^{-i\phi}\right)= \frac{1}{\pi}\sum_{j = 1}^{K}\alpha_j S_{1-j}\left(\rho e^{i\phi}\right) \nonumber\\
    &&= \frac{1}{\pi}\sum_{j = 1}^K\alpha_jS_j\left(\rho e^{i\phi}\right) = p_{11}\left(\rho e^{i\phi}\right)\label{eq:p11_0}
\end{eqnarray} and using analogous arguments,
\begin{eqnarray}
    p_{12}\left(\rho e^{-i\phi}\right)
      =-p_{12}\left(\rho e^{i\phi}\right).\label{eq:p12_0}
\end{eqnarray}
    We can use (\ref{eq:p11}), (\ref{eq:p12}), (\ref{eq:p11_0}) and (\ref{eq:p12_0}) to obtain the relation
\begin{eqnarray}
    & p_{11}\left(\rho e^{k\pi i/K - \phi i}\right) = p_{11}\left(\rho e^{-k\pi i/K + \phi i}\right) = p_{11}\left(\rho e^{k\pi i/K + \phi i-2k\pi i/K}\right)\nonumber\\
    & = p_{11}\left(\rho e^{k\pi i/K + \phi i}\right)\cos\left(\frac{-4k\pi}{K}\right) - p_{12}\left(\rho e^{k\pi i/K + \phi i}\right)\sin\left(\frac{-4k\pi}{K}\right)\nonumber\\
    & = p_{11}\left(\rho e^{k\pi i/K + \phi i}\right)\cos\left(\frac{4k\pi}{K}\right) + p_{12}\left(\rho e^{k\pi i/K + \phi i}\right)\sin\left(\frac{4k\pi}{K}\right).\nonumber \label{eq:p11_k}
\end{eqnarray} and using analogous arguments,
\begin{eqnarray}
    & p_{12}\left(\rho e^{k\pi i/K - \phi i}\right) =
    & = -p_{12}\left(\rho e^{k\pi i/K + \phi i}\right)\cos\left(\frac{4k\pi}{K}\right) + p_{11}\left(\rho e^{k\pi i/K + \phi i}\right)\sin\left(\frac{4k\pi}{K}\right).\nonumber \label{eq:p12_k}
\end{eqnarray}
Thus, $p_{11}^2+p_{12}^2=s^2$ is invariant under reflection about $\phi = k\pi i/K, k = 1,...,K$ and
$\frac{\left(p_{11},p_{12}\right)}{\sqrt{p_{11}^2+p_{12}^2}}$ is reflected across $\phi = k\pi i/K, k = 1,...,K$.
Since $f$ is a conformal mapping, it preserves rotation symmetry and reflection symmetry,
\begin{equation}
    \begin{aligned}
        f\left(\rho e^{i\phi}e^{2\pi ik/K}\right)&= f\left(\rho e^{i\phi}\right)e^{2\pi ik/K}, \nonumber\\
        f\left(\rho e^{-i\phi}\right) &= \overline{f\left(\rho e^{i\phi}\right)},\nonumber\\
    \end{aligned}
\end{equation}
We have $P_{11}\left(w\right) = p_{11} \left( f^{-1}\left(w\right) \right)$ and $P_{12}\left(w\right) = p_{12}\left( f^{-1}\left(w\right)\right)$ for $w \in E_K$, $P_{11}^2+P_{12}^2$ is invariant under the symmetries in the set $G_K$ and the vector,
$\frac{\left(P_{11},P_{12}\right)}{\sqrt{P_{11}^2+P_{12}^2}}$, is reflected about the symmetry axes of the polygon and rotates by $4\pi k/K$ under rotations of $2\pi k/K$ for $k = 1,...,K$.
\end{proof}

\begin{proposition} Let $\P_R= \left(P_{11}, P_{12}\right)$ be the unique solution of the boundary-value problem (\ref{P11_P12_laplace}). Then $P_{11}\left(0,0\right) = 0, P_{12}\left(0,0\right) = 0$ at the centre of all regular polygons, $E_K$.
However, $\P_R \left(x, y\right) \neq \left(0,0\right)$ for $\left(x,y\right)\neq \left(0,0\right)$, for all $E_K$ with $K \neq 4$ i.e. the WORS is a special case of $\P_R$ on $E_4$ such that $\P_R = \left(0,0\right)$ on the square diagonals. For $K\neq 4$, the origin is the unique zero of the unique solution $\P_R$, referred to as the ``ring solution" in the rest of the paper.
    \label{isotropic_center}
\end{proposition}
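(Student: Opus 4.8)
The plan is to transplant the problem to the unit disc via the Schwarz--Christoffel map and then exploit the fact that, written as a complex-valued function, the boundary datum is a coarse piecewise-constant approximation of $-\tfrac{B}{2C}e^{2i\theta}$. Since $f\colon\overline D\to\overline{E_K}$ is a homeomorphism, conformal on $D$, with $f(0)=0$, and since $\P_b\neq(0,0)$ on $\partial E_K$, the zero set of $\P_R=(P_{11},P_{12})$ inside $E_K$ is the $f$-image of the zero set of the complex harmonic function $\psi:=p_{11}+ip_{12}$ on $D$, where $\psi$ is the Poisson extension of the datum equal to $v_k:=-\tfrac{B}{2C}e^{i\theta_k}$, $\theta_k=(2k-1)2\pi/K$, on the disc arc $D_k$. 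So it suffices to prove that the origin is the only zero of $\psi$ in $D$ for $K\neq4$, and that for $K=4$ the zero set of $\psi$ contains the two coordinate diameters (which $f$ maps onto the two diagonals of $E_4$). The assertion $P_{11}(0,0)=P_{12}(0,0)=0$ is immediate from the mean-value property: $\psi(0)=\tfrac1{2\pi}\int_0^{2\pi}\psi(e^{i\theta})\,d\theta=\tfrac1K\sum_{k=1}^K v_k=-\tfrac{B}{2CK}\sum_{k=1}^K e^{i\theta_k}=0$ for every $K\ge3$, the last sum being a full geometric sum over $K$th roots of unity; alternatively this drops out of the rotation identity in Proposition~\ref{symmetry_proposition} evaluated at the rotation-fixed origin.

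The key step is the Fourier expansion of the datum. A direct computation of $c_n=\tfrac1{2\pi}\int_0^{2\pi}\psi(e^{i\theta})e^{-in\theta}\,d\theta$ shows that $c_n=\kappa/n$ when $n\equiv2\pmod K$ and $c_n=0$ otherwise, where $\kappa=-\tfrac{BK}{4\pi Ci}\,e^{2\pi i/K}\bigl(1-e^{-4\pi i/K}\bigr)\neq0$; the point is that the residual factor $1-e^{-2\pi in/K}=1-e^{-4\pi i/K}$ is the same for every $n$ in the progression. Separating positive and negative frequencies and taking harmonic extensions ($e^{in\theta}\mapsto z^n$ for $n>0$, $e^{in\theta}\mapsto\bar z^{|n|}$ for $n<0$) then gives the closed form
\[
\psi(z)=h_1(z)-\overline{h_3(z)},\qquad h_1(z)=\kappa\,z^2\,\Phi_1(z^K),\quad h_3(z)=\overline\kappa\,z^{K-2}\,\Phi_2(z^K),
\]
where $\Phi_1(w)=\sum_{m\ge0}\tfrac{w^m}{mK+2}$ and $\Phi_2(w)=\sum_{m\ge0}\tfrac{w^m}{mK+K-2}$ are holomorphic on $D$. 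Now the elementary identities $2\Phi_1(w)+Kw\Phi_1'(w)=\tfrac1{1-w}=(K-2)\Phi_2(w)+Kw\Phi_2'(w)$ (each collapsing to $\sum_m w^m$ once $mK+2$, resp.\ $mK+K-2$, cancels) force $h_1'(z)=\tfrac{\kappa z}{1-z^K}$ and $h_3'(z)=\tfrac{\overline\kappa\,z^{K-3}}{1-z^K}$, so that the Jacobian of $\psi$ as a map $D\to\mathbb R^2$ is
\[
J_\psi(z)=|h_1'(z)|^2-|h_3'(z)|^2=\frac{|\kappa|^2\bigl(|z|^2-|z|^{2K-6}\bigr)}{|1-z^K|^2}.
\]
This one formula already encodes the trichotomy: on $D\setminus\{0\}$ we have $J_\psi>0$ for $K\ge5$, $J_\psi<0$ for $K=3$, and $J_\psi\equiv0$ precisely when $K=4$ (when $2K-6$ equals $2$).

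For $K\neq4$ the conclusion is then purely topological. Since $J_\psi$ is sign-definite and nonvanishing on $D\setminus\{0\}$, $\psi$ is a local diffeomorphism there, so its zeros in $D\setminus\{0\}$ are isolated with local index $\varepsilon:=\operatorname{sign}J_\psi$, the same sign for every one of them. At the origin $\psi(z)=\tfrac\kappa2 z^2+o(|z|^2)$ if $K\ge5$ and $\psi(z)=-\tfrac{\kappa}{K-2}\bar z+o(|z|)$ if $K=3$, so $0$ is a zero of index $2$, resp.\ $-1$. On the other hand, as $r\to1^-$ the curve $\psi\big|_{|z|=r}$ converges to the closed polygonal curve $\Pi$ through $v_1,\dots,v_K$ (at each of the $K$ boundary jumps $\psi$ sweeps along the chord $[v_k,v_{k+1}]$, since the Poisson extension interpolates linearly near a jump); because consecutive vertices differ by the central angle $4\pi/K\neq\pi$, every such chord avoids $0$, whence $\psi$ is bounded away from $0$ near $\partial D$ and the winding number of $\psi\big|_{|z|=r}$ about $0$ equals that of $\Pi$, namely $2$ for $K\ge5$ and $-1$ for $K=3$. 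The argument principle now gives $\sum_{\psi(z)=0}\operatorname{ind}_z\psi=$ this winding number, which already equals the index carried by the origin alone; any further zero would add $\varepsilon$ of that same sign and break the equality. Hence the origin is the unique zero of $\psi$, and pulling back through $f$ shows $\P_R$ vanishes in $E_K$ only at the centre.

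For $K=4$ one has $\Phi_1\equiv\Phi_2$ (the coefficient sequences $\tfrac1{4m+2}$ coincide), so $\psi(z)=\kappa\bigl(F(z)-\overline{F(z)}\bigr)=2i\kappa\,\operatorname{Im}(F(z))$ with $F(z)=z^2\Phi_1(z^4)$; as $F$ has real Taylor coefficients it is real-valued on both coordinate diameters of $D$, so $\psi$ — and hence $\P_R$ on the two diagonals of $E_4$ — vanishes identically there, recovering the WORS, whose nodal set is one-dimensional, exactly as the vanishing of $J_\psi$ predicts. I expect the genuine obstacle to be the very first algebraic step: there is no a priori reason for the Jacobian of $\psi$ to be sign-definite, and the whole argument hinges on spotting that the Fourier coefficients collapse to $c_n=\kappa/n$, which turns $h_1'$ and $h_3'$ into rational functions with a common denominator and thereby exposes both the sign of $J_\psi$ and the precise reason ($2K-6=2$) why $K=4$ is exceptional; once this is in hand the degree count is routine, the only mild care needed being the application of the argument principle near $\partial D$, where $\psi$ is discontinuous.
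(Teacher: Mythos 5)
Your proof is correct, but it takes a genuinely different route from the paper's. Both arguments transplant the problem to the unit disc via the Schwarz--Christoffel map and get $\P_R(0,0)=0$ from the mean-value property, but from there they diverge. The paper works with the real component $p_{11}$ alone: it writes the Poisson integral in closed form via the $\arctan$ formula, differentiates along rays with respect to $x=\frac{1+\rho}{1-\rho}$, and proves by direct trigonometric estimates that $\partial p_{11}/\partial x$ is positive for $K=3$, identically zero for $K=4$ and negative for $K>4$ on the fundamental sector $0\le\phi\le\pi/K$; monotonicity along rays plus the symmetries of Proposition~\ref{symmetry_proposition} then force $p_{11}$ (hence $\P_R$) to vanish only at the origin when $K\neq4$. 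You instead treat the full vector as a complex harmonic function $\psi=p_{11}+ip_{12}$, compute its Fourier coefficients ($c_n=\kappa/n$ on the progression $n\equiv2\pmod K$, which I have checked, with $\kappa=-\frac{BK}{2\pi C}\sin\frac{2\pi}{K}$ real and nonzero), obtain the canonical decomposition $\psi=h_1-\overline{h_3}$ with $h_1'=\kappa z/(1-z^K)$ and $h_3'=\bar\kappa z^{K-3}/(1-z^K)$, and read off the sign of the Jacobian $J_\psi\propto|z|^2-|z|^{2K-6}$; uniqueness of the zero then follows from a degree count (local index $2$ at the origin for $K\ge5$, $-1$ for $K=3$, matching the winding number of the boundary polygon). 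Your approach buys a conceptually cleaner explanation of the trichotomy $K=3$, $K=4$, $K\ge5$ from a single formula, and as a by-product identifies the topological degree of the central zero ($+2$ for $K\ge5$, $-1$ for $K=3$), which is exactly the $+1$-vortex versus $-1/2$-defect dichotomy the paper describes only after the fact; the paper's ray-monotonicity argument gives finer pointwise information about $p_{11}$ but requires the sector-by-sector symmetry reduction. The only place you should add detail is the boundary step of the degree argument: one must justify that for $r$ close to $1$ the image of $|z|=r$ under the Poisson extension of the piecewise-constant datum lies in a neighbourhood of the chord polygon $\Pi$ (so that $\psi$ is bounded away from $0$ in an annulus and the winding number equals that of $\Pi$); this is standard harmonic-measure reasoning but is the one point where your write-up is only sketched.
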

\begin{proof}
We set $\rho=0$ in (\ref{Poisson_Integral}) to compute $\left(P_{11}, P_{12} \right)\left(0,0\right) = \left(p_{11}, p_{12} \right)\left(f^{-1} \left(0,0\right) \right)$ as shown below, recalling that $f\left(0,0\right) = \left(0,0\right)$ i.e.
\begin{eqnarray}
    p_{11}\left(0,0\right) &&= \frac{1}{2\pi}\sum_{k = 1}^K\alpha_k\int_{D_k} \mathrm{d\theta} =
     \frac{1}{K}\sum_{k = 1}^K \alpha_k \nonumber \\
    && = -\frac{B}{2KC}\sum_{k = 1}^K \cos\left(\left(2k-1\right)2\pi/K\right) \nonumber\\
    && = -\frac{B}{2KC}\sum_{k = 1}^K \frac{\sin\left(\left(2k-1\right)2\pi/K+2\pi/K\right)-\sin\left(\left(2k-1\right)2\pi/K-2\pi/K\right)}{2\sin\left(2\pi/K\right)} \nonumber\\
    && = -\frac{B}{4KC \sin\left(2\pi/K\right)}\sum_{k = 1}^K \sin\left(4\pi k/K\right) - \sin\left(4\pi \left(k-1\right)/K\right) = 0\nonumber
    \label{eq:p10}
\end{eqnarray}
       and similarly, $p_{12}\left(0,0\right) = 0$.
Hence, we have $P_{11}\left(0,0\right) = P_{12}\left(0,0\right) = 0$ for any regular polygon, since $\left(0,0\right)$ is a fixed point of the mapping $f$.

Set $x = \frac{1 + \rho}{1 - \rho}$. For a fixed $\phi = \phi^*$, if $\frac{\partial p_{11}}{\partial x}\equiv 0$ for any $x \geq 1$, $p_{11}\equiv 0$ on $\phi =\phi^*$.
    Otherwise, if $\frac{\partial p_{11}}{\partial x} > 0\left(<0\right)$ for any $x > 1$, $p_{11}=0$ only at the center. Recalling (\ref{Poisson_Integral}), we have
\begin{eqnarray}
    &&p_{11}(\rho e^{i\phi}) = \sum_{k = 1}^{K}\frac{1}{\pi}\alpha_k S_k(\rho e^{i\phi}) \nonumber \\
    &&= \frac{B}{2\pi C}\sum_{k = 1}^{K}\arctan\left(x\tan\left(\pi k/K-\phi/2\right)\right)\left(\cos\frac{2\pi\left(2k+1\right)}{K}-\cos\frac{2\pi\left(2k-1\right)}{K}\right)+\alpha_{k_*} \nonumber\\
    &&= -\frac{B}{\pi C}\sin\frac{2\pi}{K}\sum_{k = 1}^{K}\arctan\left(x\tan\left(\pi k/K-\phi/2\right)\right)\left(\sin\frac{4\pi k}{K}\right)+\alpha_{k_*}\nonumber
\label{eq:p1c}
\end{eqnarray}
    where $\alpha_{k_*}$ is the boundary value on the segment for which $S_k$ is an improper integral (\ref{Sk_1}) i.e. $2\pi\left(k_*-1\right)/K\leq \phi + (2n+1)\pi<2\pi k_*/K$,$n\in\mathbb{Z}$. From Proposition \ref{symmetry_proposition}, it suffices to focus on the sector $0 \leq \phi \leq \frac{\pi}{K}$. Next, we define
    \begin{equation}
        K_{half} = \begin{cases}
            \frac{K-1}{2},&\ K\ is\ odd,\\
            \frac{K}{2}-1,&\ K\ is\ even,\\
        \end{cases}
        \nonumber
    \end{equation} and compute
    \begin{eqnarray}
        &&\frac{\partial p_{11}}{\partial x} = -\frac{B}{\pi C}\sin\frac{2\pi}{K}\sum_{k = 1}^{K}\frac{\tan\left(\pi k/K-\phi/2\right)}{1+\tan^2\left(\pi k/K-\phi/2\right)x^2}\sin\left(4\pi k/K\right) \nonumber\\
           &&=-\frac{B}{2\pi C}\sin\frac{2\pi}{K}\sum_{k=1}^{K_{half}}\left(\frac{\sin\left(2\pi k/K-\phi\right)}{1+\left(x^2-1\right)\sin^2\left(\pi k/K-\phi/2\right)}\right.\nonumber\\
           &&+\left.\frac{\sin\left(2\pi k/K+\phi\right)}{1+\left(x^2-1\right)\sin^2\left(\pi k/K+\phi/2\right)}\right)\sin\left(4\pi k/K\right).\nonumber
\label{derivation}
\end{eqnarray}
    When $x=1$, i.e., $\rho = 0$,  we obtain
    \begin{eqnarray}
        \left.\frac{\partial p_{11}}{\partial x}\right\vert_{x=1} &&= -\frac{B}{2\pi C}\sin\frac{2\pi}{K}\sum_{k=1}^{K_{half}}\left(\sin\left(2\pi k/K-\phi\right)+\sin\left(2\pi k/K+\phi\right)\right)\sin\left(4\pi k/K\right) \nonumber \\
        && = \frac{B}{2\pi C}\sin\frac{2\pi}{K}\cos\left(\phi\right)\sum_{k=1}^{K_{half}}\left(\cos\left(6\pi k/K\right)-\cos\left(2\pi k/K\right)\right).
   \label{eq:px}
    \end{eqnarray}

It is relatively straightforward to check using (\ref{eq:px}) that for $x = 1$,
\begin{equation}
    \frac{\partial p_{11}}{\partial x} = \begin{cases}
        0, & K\in\mathbb{Z},\ K>3;\nonumber\\
        \frac{3\sqrt{3}B}{8\pi C}\cos\phi,&\ K = 3.
    \end{cases}
    \label{equ_1}
\end{equation}
We can use (\ref{eq:px}) to study the sign of
$\left.\frac{\partial p_{11}}{\partial x}\right\vert_{x>1}$ as shown below.
When $x > 1$, $K = 3$, $0\leq \phi\leq \pi/3$, we have
\begin{eqnarray}
    &&\frac{\partial p_{11}}{\partial x}= \nonumber \\
     && -\frac{B}{2\pi C}\sin\frac{2\pi}{3}\left(\frac{\sin\left(2\pi/3-\phi\right)}{1+\left(x^2-1\right)\sin^2\left(\pi/3-\phi/2\right)}+\frac{\sin\left(2\pi/3+\phi\right)}{1+\left(x^2-1\right)\sin^2\left(\pi/3+\phi/2\right)}\right)\sin\left(4\pi/3\right) \nonumber \\
    & & > -\frac{B}{2\pi C}\sin\frac{2\pi}{3}\left(\sin\left(2\pi/3-\phi\right)+\sin\left(2\pi/3+\phi\right)\right)\sin\left(4\pi/3\right)/x^2 \nonumber \\
    &&=\left.\frac{\partial p_{11}}{\partial x}\right\vert_{x = 1}/x^2 = \frac{3\sqrt{3}B}{8\pi C}\frac{\cos\phi}{x^2} >0.\nonumber
    \end{eqnarray}
For $K = 4$, for any $x>1$, $0\leq\phi\leq \pi/4$,
\begin{eqnarray}
    &&\frac{\partial p_{11}}{\partial x}=-\frac{B}{2\pi C}\sin\frac{\pi}{2}\left(\frac{\sin\left(\pi/2-\phi\right)}{1+\left(x^2-1\right)\sin^2\left(\pi/4-\phi/2\right)}+\frac{\sin\left(\pi/2+\phi\right)}{1+\left(x^2-1\right)\sin^2\left(\pi/4+\phi/2\right)}\right)\sin\left(\pi\right) \nonumber\\
    && = 0\nonumber
\end{eqnarray}
Otherwise, for $K\in\mathbb{Z}$, $K>4$, $x>1$, we have
\begin{eqnarray}
    &&\frac{\partial p_{11}}{\partial x}
    <-\frac{B}{2\pi C}\sin\frac{2\pi}{K}\sum_{k=1}^{K_{half}}\left(\frac{\sin\left(2\pi k/K-\phi\right)}{1+\left(x^2-1\right)\sin^2\left(\theta_*\right)}+\frac{\sin\left(2\pi k/K+\phi\right)}{1+\left(x^2-1\right)\sin^2\left(\theta_*\right)}\right)\sin\left(4\pi k/K\right) \nonumber \\
    &&=\left.\frac{\partial p_{11}}{\partial x}\right\vert_{x =1}/\left(\cos^2\left(\theta_*\right)+x^2\sin^2\left(\theta_*\right)\right) = 0,\nonumber
\end{eqnarray}
where
\begin{equation}
    \theta_* = \begin{cases}
        \frac{\pi}{4}-\frac{\pi}{2K}, &\ K\ mod\ 4 = 0;\nonumber\\
        \frac{\pi}{4}+\frac{\pi}{4K}, &\ K\ mod\ 4 = 1;\\
        \frac{\pi}{4}, &\ K\ mod\ 4 = 2;\\
        \frac{\pi}{4}-\frac{\pi}{4K}, &\ K\ mod\ 4 = 3.\\
    \end{cases}
\end{equation}
Therefore when $x>1$, $0\leq\phi\leq \pi/K$
\begin{equation}
    \frac{\partial p_{11}}{\partial x} \begin{cases}
        >\ 0,&\ K = 3;\nonumber\\
        =\ 0,&\ K = 4;\\
        <\ 0, & K\in\mathbb{Z},\ K > 4;\\
    \end{cases}
    \label{>_1}
\end{equation} and by the symmetry results in Proposition~\ref{symmetry_proposition}, we have that $\frac{\partial p_{11}}{\partial x}$ is non-zero for $x>1$, $K \neq 4$ for any regular polygon $E_K$.
So $p_{11} = 0$ everywhere for the square domain and for $K \neq 4$, $p_{11}$ only vanishes at the origin. For any $K\geq 3$, when $\phi = 0$,
\begin{equation}
\begin{aligned}
    p_{12}(\rho) &= \sum_{k = 1}^{K}\frac{1}{\pi} \beta_k S_k(\rho)  = \frac{B}{\pi C}\sin\frac{2\pi}{K}\sum_{k = 1}^{K_{half}}\left\{\arctan\left(\frac{1+\rho}{1-\rho}\tan\left(\pi k/K\right)\right)\cos \frac{4\pi k}{K}\right. \\
    & + \left.\arctan\left(\frac{1+\rho}{1-\rho}\tan\left(\pi \left(K-k\right)/K\right)\right)\cos \frac{4\pi \left(K-k\right)}{K}\right\}+\beta_{k_*} \nonumber\\
    &= 0.
\end{aligned}
\end{equation}
This when combined with the properties of $p_{11}$ proven above, suffices to show that the ring solution $\P_R = \left(P_{11}, P_{22}\right)\left( w \right) = \left(p_{11}, p_{22} \right)\left(f^{-1} \left(w \right) \right)$ vanishes along the diagonals, $\phi=0$ and $\phi=\frac{\pi}{2}$, for a square $E_4$. For $K \neq 4$, we have $P_{11} \neq 0$ for $w\neq \left(0,0\right)$ and hence the origin is the unique zero of the associated ring solution.
\end{proof}
Remark: We briefly remark on the equivalence of $\P_R$ for $E_4$ and the WORS analysed in \cite{canevari2017order}. The WORS is defined in a square domain with edges parallel to the $x$ and $y$-axis respectively, and hence, the eigenvectors are $\x$, $\y$ and $\z$ respectively. The WORS belongs to a class of LdG equilibria of the form
\[\Q = q_1\left(\x\otimes \x - \y\otimes \y\right) + q_2\left(\x\otimes \y + \y\otimes \x\right) - \frac{B}{6C}\left(2\z\otimes \z - \x\otimes \x - \y\otimes \y \right)\]
at $A = -\frac{B^2}{3C}$, and the WORS has $q_2$ identically zero everywhere. In Proposition~\ref{isotropic_center}, we rotate the square by $45$ degrees, so that $(q_1, q_2)$ are related to $\P_R$ by
\begin{equation}
    \left(\begin{tabular}{cc}
        $q_1$ & $q_2$\\
        $q_2$ & $-q_1$
    \end{tabular}\right)(\r) = S\P_R(S^T\r)S^T =
    \left(\begin{tabular}{cc}
        $-P_{12}$ & $P_{11}$\\
        $P_{11}$ & $P_{12}$
    \end{tabular}\right)
    (S^T\r)
\end{equation}
where $S$ is the corresponding rotation matrix. Hence,
$q_2 = 0$ in \cite{canevari2017order} translates to $P_{11}=0$ in Proposition~\ref{isotropic_center}.\\
\begin{figure}
    \begin{center}
        \includegraphics[width=0.9\columnwidth]{./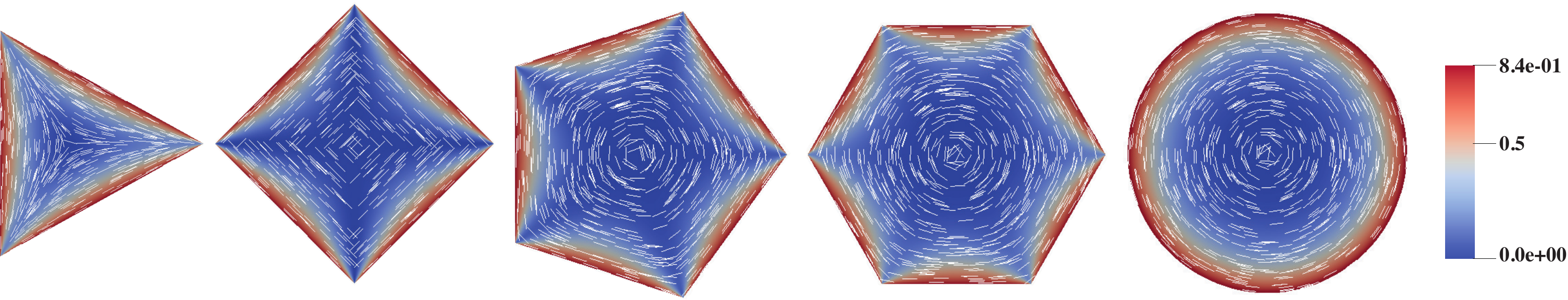}
        \caption{Solutions $\left(P_{11}^0,P_{12}^0\right)$ of (\ref{P11_P12_laplace}) when $K = 3,4,5,6$ in regular triangle, square, pentagon, hexagon domain and $K\to\infty$ in disc domain.
        The vector $\left(\cos\left(arctan\left(P_{12}^0/P_{11}^0\right)/2\right),\sin\left(\arctan\left(P_{12}^0/P_{11}^0\right)/2\right)\right)$ is represented by white lines and the order parameter $\left(s^0\right)^2=\left(P_{11}^0\right)^2 + \left(P_{12}^0\right)^2$ is represented by color from blue to red. The maximum of $(s^0)^2$ on boundary is $\left(\frac{B}{2C}\right)^2 \approx 0.84$, with constant $B = 0.64\times10^4 N/m^2$ and $C = 0.35\times 10^4 N/m^2$.}
        \label{direction_multi_edge}
    \end{center}
\end{figure}

With Proposition~\ref{isotropic_center}, we address the question raised at the beginning of this section. The Ring solution, $\P_R$, is the unique solution of the limiting problem (\ref{zero_euler}) and provides an excellent approximation to global minima of the reduced energy (\ref{p_energy}) for $\lambda$ sufficiently small, for all $E_K$ with $K\geq 3$ (see error estimates in (\ref{eq:error_estimates})). The square, $E_4$ is special since the eigenvectors of the associated $\P_R$ are constant in space and $\P_R$ vanishes along the square diagonals. For $K\neq 4$, $\P_R$ has a unique isotropic point at the origin and is referred to as the ring solution, since for $K>4$, the director profile (the profile of the leading eigenvector of $\P_R$ with the largest positive eigenvalue) follows the profile of a $+1$-vortex located at the centre of the polygon. In Figure \ref{direction_multi_edge}, we numerically plot the ring configuration for a triangle, square, pentagon, hexagon and a disc.
For $K=3$, the isotropic point at the centre of the equilateral triangle resembles a $-1/2$ nematic point defect. This is a very interesting example of the effect of geometry on solutions with profound optical and experimental implications.

Following Lemma $6.1$ in \cite{canevari2017order}, we can prove that for any $\lambda>0$, there exists a critical point $\P_s\in C^2\left(E_K\right)\cap C^0\left(\overline{E_K}\right)$ of (\ref{p_energy}) which satisfies the boundary condition $\P_s = \P_b$ on $\partial E_K$, in the class
$\mathscr{A}_{sym}=\{\P\in\mathscr{A};\P(\r) = S\P(S^T\r)S^T,S\in G_K\}$,
where $G_K = \{S\in O(2):SE_K\in E_K\}$, and $\P_s\left(0,0\right) = 0$.
\begin{figure}
    \begin{center}
        \includegraphics[width=0.6\columnwidth]{./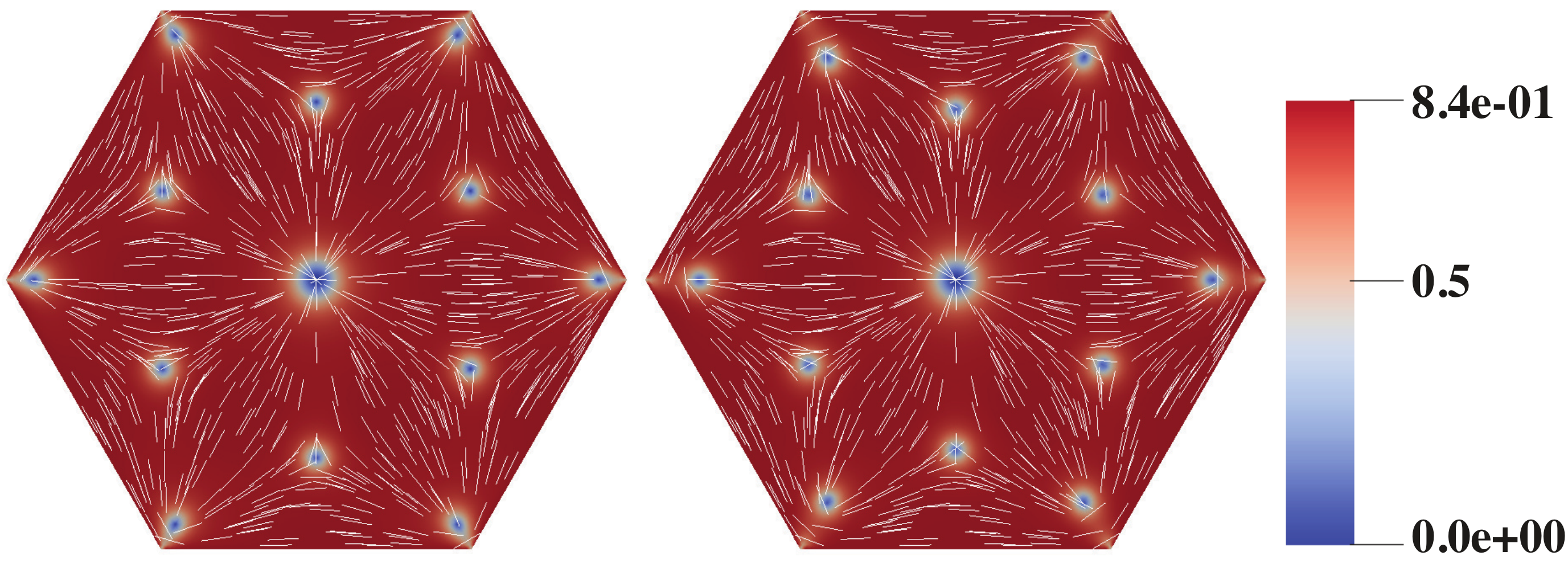}
        \caption{Two symmetric critical points of (\ref{p_energy})
        with multiple interior zeros when $\lambda^2 = 1500$.}
        \label{fig:interior_zeroes}
    \end{center}
\end{figure}
We refer to these critical points as ``symmetric critical points". The ring solution, $\P_R$ is a special example of a symmetric critical point at $\lambda=0$. However, we numerically find symmetric critical points with the zero at the origin and multiple interior zeroes, as illustrated on a hexagon, $E_6$ in Figure~\ref{fig:interior_zeroes}. These critical points, $\P_c$, with multiple zeroes are unstable critical points of (\ref{p_energy}) in the sense that the associated second variation of the reduced energy
\begin{equation}
    \partial^2F_{\lambda}[\eta] = \int_{E_K}|\nabla \eta|^2 + \frac{\lambda^2}{4}\left(|\P_c|^2-\frac{B^2}{2C^2}\right)|\eta|^2+\frac{\lambda^2}{2}\left(\P_c\cdot\eta\right)^2
	\label{eq:second}
\end{equation}
 has negative eigenvalue, where $\eta$ is an arbitrary symmetric, traceless $2\times 2$ matrix vanishing on $\partial E_K$. In fact, in \cite{canevari2017order}, the authors prove that for the WORS, the smallest eigenvalue of (\ref{eq:second}) is strictly decreasing with $\lambda$. We refer to the unique minimizer of (\ref{p_energy}) for sufficiently small $\lambda$ as being ``ring-like" since they are uniformly close to $\P_R$ from the error estimates in (\ref{eq:error_estimates}). By analogy with the work in \cite{canevari2017order}, we expect the smallest eigenvalue of the second variation of the reduced energy in (\ref{eq:second}) about the ring-like solutions, to be a decreasing function of $\lambda$, so that the ring-like solution branch is globally stable for small $\lambda$ and is unstable for large $\lambda$.

Whilst $\P_R$ has been discussed in a strictly two-dimensional setting, it is worth pointing out the 3D relevance of the ring solution. In \cite{canevari_majumdar_wang_harris}, the authors prove that the WORS is the global LdG energy minimizer on three-dimensional wells with a square cross-section, for $\lambda$ sufficiently small and for all choices of the well height, with at least two different choices of boundary conditions on the top and bottom surfaces of the well. The same remarks apply to the ring solution, $\P_R$, for three-dimensional wells that have $E_K$ as their cross-section. In other words, $\P_R$ is a physically relevant approximation to global LdG minima on three-dimensional wells with a regular polygon cross-section, for $\lambda$ sufficiently small, independently of well height. Further, as $\lambda$ increases, the authors report novel mixed solutions on three-dimensional wells with a square cross-section that exhibit the WORS profile at the centre of the well. Using similar reasoning, we expect ring-like solutions to lose stability as $\lambda$ increases on 3D wells, that have $E_K$ as their cross-section. However, they may be observable in mixed solutions, making them of relevance in the large $\lambda$-regime too.
Finally, we numerically check how well $\P_R$ approximates solutions of the nonlinear system (\ref{Euler_Lagrange}) for small $\lambda$.
We use FEniCS package~\cite{olgg2012fenics} to solve the  Laplace equation for $\P_R$ with Dirichlet boundary conditions. We set the boundary value at the vertices to be the average of the two constant values on the intersecting edges at the vertex in question. We use standard FEM (Finite Element Methods) and the Newton's method to solve the nonlinear system (\ref{Euler_Lagrange}) for small $\lambda$.
\begin{figure}
\centering
        \includegraphics[width=1\columnwidth]{./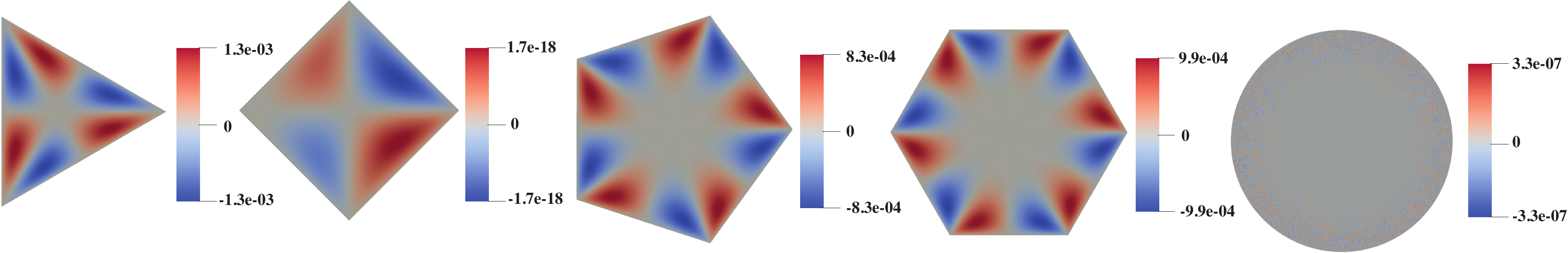}
        \caption{$P_{11}^1P_{12}^0-P_{12}^1P_{11}^0 = s^0s^1\sin\left(2\gamma^0-2\gamma^1\right)$ for regular triangle, square, pentagon, hexagon and disc.}
        \label{tan2theta_1_error}
\end{figure}
In Figure \ref{tan2theta_1_error}, we consider $\P^1$ as the numerical solution of (\ref{Euler_Lagrange}) with $\lambda^2 = 1$ and $\P^0$ as the numerically computed ring solution with $\lambda^2 = 0$. In Figure \ref{tan2theta_1_error}, we plot
$P_{11}^1P_{12}^0-P_{12}^1P_{11}^0 = s^0s^1\sin\left(2\gamma^0-2\gamma^1\right)$ for a regular triangle, square, pentagon, hexagon and disc respectively, where $\left(P_{11}^0, P_{12}^0 \right) = s^0 \left(\cos 2 \gamma^0, \sin 2 \gamma^0 \right)$ and $\left(P_{11}^1, P_{12}^1 \right) = s^1 \left(\cos 2 \gamma^1, \sin 2 \gamma^1 \right)$.
The color bars show that the maximum difference for a triangle, pentagon and hexagon is about $1e-3$, however the difference for square and disc is much lower, $1.7e-18$ and $3.3e-7$ respectively.
This is simply because the eigenvectors of $\P^1$ and $\P^0$ are the same on a square and a disc i.e. for a square, the eigenvectors are $\x$ and $\y$ respectively whereas the eigenvectors are the radial unit-vector and the azimuthal unit-vector on a disc for any $\lambda$\cite{canevari2017order,ignat2016instability}.
The eigenvectors do change with $\lambda$ on $E_K$ for $K\neq 4$ and this explains the larger error for $K\neq 4$ noted above.
\begin{figure}
\centering
        \includegraphics[width=1\columnwidth]{./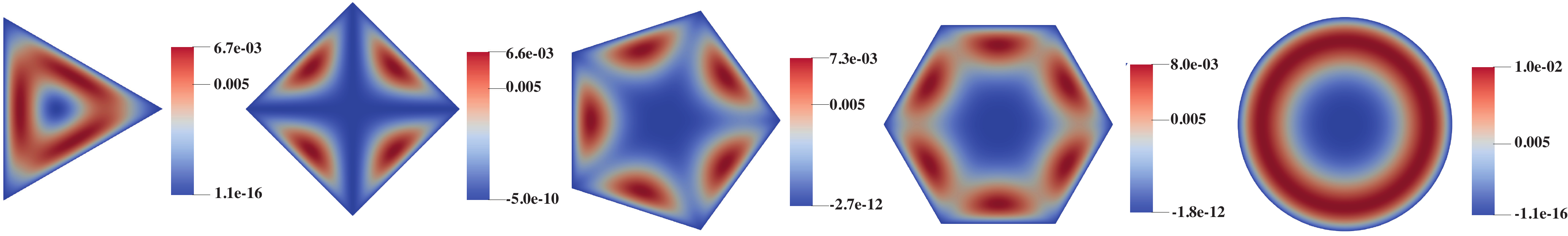}
        \caption{$|P^1|^2/2-|P^0|^2/2 = \left(s^1\right)^2-\left(s^0\right)^2$ for regular triangle, square, pentagon, hexagon and disc.}
        \label{s_1_error}
\end{figure}
We also plot $(s^1)^2-(s^0)^2$ for a regular triangle, square, pentagon, hexagon and disc
in Figure \ref{s_1_error} and the differences are within $1e-2$. These numerical experiments demonstrate the validity of $\P_R$ as an excellent approximation to minima of (\ref{p_energy}) for small $\lambda$.
\begin{figure}
\centering
        \includegraphics[width=0.6\columnwidth]{./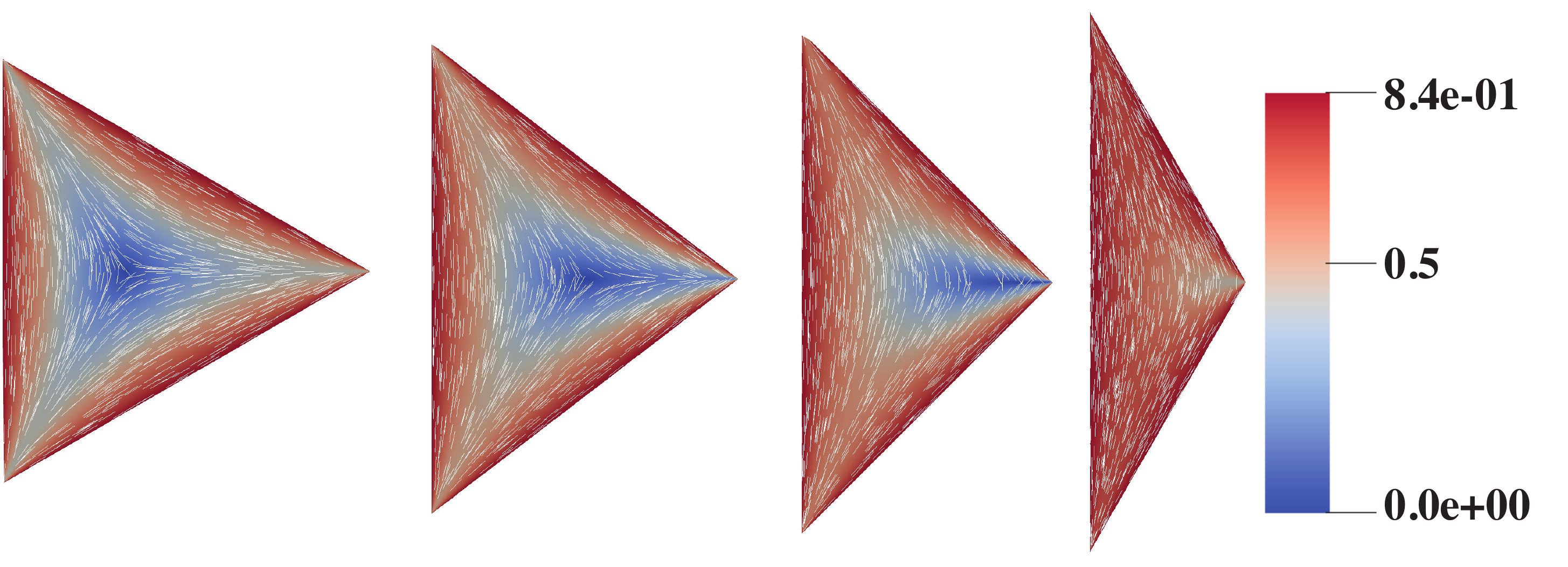}
        \caption{The solutions of (\ref{P11_P12_laplace}) with corresponding tangential boundary condition in isosceles triangles domain with the top angle $120^0$, $90^0$, $75^0$ and $60^0$ respectively.
        The vector $\left(\cos\left(\arctan\left(P_{12}^0/P_{11}^0\right)/2\right),\sin\left(\arctan\left(P_{12}^0/P_{11}^0\right)/2\right)\right)$ is represented by white lines and the order parameter $\left(s^0\right)^2=\left(P_{11}^0\right)^2 + \left(P_{12}^0\right)^2$ is represented by color from blue to red.}
        \label{nonregular_triangle}
\end{figure}
Finally, in Figure \ref{nonregular_triangle}, we numerically compute the solution of the Laplace boundary value problem for the matrix $\P$, on different isosceles triangles subject to Dirichlet tangent boundary conditions. We numerically observe a single isotropic point migrating from the apex vertex to the centre of the triangle, as the angle at the apex decreases from $120^0$ to $60^0$ ($E_3$). This again illustrates the effect of geometry on the location of the isotropic points/optical singularities.
\subsection{The $\lambda\rightarrow\infty$ Limit or the Oseen-Frank Limit}
\label{subsec:infty_limit}
\subsubsection{The Number of Stable States}
The $\lambda\to\infty$ limit is analogous to the ``vanishing elastic constant limit" or the ``Oseen-Frank limit" in \cite{majumdar2010landau}. Let $\P^{\lambda}$ be a global minimizer of (\ref{p_energy}), subject to a fixed boundary condition $\left(P_{11b},P_{12b}\right)$ on $\partial E_K$. As $\lambda \to \infty$, the minima, $\P^{\lambda}$, converge strongly in $W^{1,2}$ to $\P^{\infty}$ where
    \begin{equation}
        \P^{\infty} = \frac{B}{2C}\left(\n^{\infty}\otimes \n^{\infty}-\frac{1}{2}\I_2\right),\nonumber
    \end{equation}
    $\n^{\infty} = \left(\cos\gamma^{\infty},\sin\gamma^{\infty}\right)$ and $\gamma^{\infty}$ is a global minimizer of the energy
    \[
        I[\gamma]: = \int_{E_K} \left| \nabla \gamma \right|^2 \mathrm{dA}
    \]
subject to Dirichlet conditions, $\gamma=\gamma_b$ on $\partial E_K$. Setting $\n_b = \left(\cos\gamma_b, \sin\gamma_b \right)$, we have $\n_b$ is tangent to the edges $C_k$, which constrains the values of $\gamma_b$ on $C_k$, and if $\textrm{deg}\left(\n_b,\partial E_K\right) = 0$, then $\gamma^{\infty}$ is a solution of the Laplace equation
    \begin{equation}
        \begin{aligned}
            \Delta\gamma^{\infty} &= 0,\ on\ E_K\\
        \end{aligned}
        \label{infty_euler}
    \end{equation}
subject to $\gamma=\gamma_b$ on $\partial E_K$ \cite{lewis2014colloidal,bethuel1993asymptotics}.
Since we are largely presenting heuristic arguments in this section, we take $\gamma_b$ to be piecewise constant on the edges $C_k$, consistent with the tangent conditions for $\n_b$ on $\partial E_K$. This choice of $\gamma_b$ would not work for the Dirichlet energy due to the discontinuities at the corners \cite{lewis2014colloidal}.

There are multiple choices of Dirichlet conditions for $\gamma_b$ consistent with the tangent boundary conditions, which implies that there are multiple local/global minima of (\ref{p_energy}) for large $\lambda$. We present a simple estimate of the number of stable states if we restrict $\gamma_b$ so that $\gamma^\infty$ rotates by either $2\pi/K-\pi$ or $2\pi/K$ at a vertex (see Figure \ref{normal_angle}(a) and (b), referred to as ``splay" and ``bend" vertices respectively). Since we require $\textrm{deg}\left(\n_b,\partial E_K\right) = 0$, we necessarily have $x$ ``splay" vertices and $\left( K - x \right)$ ``bend" vertices such that
\begin{equation}
    x\left(2\pi/K-\pi\right)+\left(K-x\right)\left(2\pi/K\right) = 0\nonumber
\end{equation}
 only when $x = 2$. We thus have $(K-2)$ bend corners and $2$ splay corners. We can define a topological charge with each corner, associated with the amount of director rotation about the corner. Skipping the technical details, a bend corner has winding number $w_b = -\frac{2\pi}{K}\div 2\pi= -\frac{1}{K}$ and a splay corner has winding number $w_S = \frac{(K-2)\pi}{K}\div 2\pi = \frac{K-2}{2K}$. The total winding number is zero. This is consistent with the results in \cite{yao2018topological}, where the authors claim that the general rule of the total winding number of a 2D liquid crystal in a polygon with $K$ sides is $-\frac{K-2}{2}$ under the assumption that molecules always make a splay pattern at the polygon corners.
So we have at least $K\choose 2$ minima of (\ref{p_energy}) for $\lambda$ sufficiently large.
\begin{figure}
    \begin{center}
        \includegraphics[width=0.4\columnwidth]{./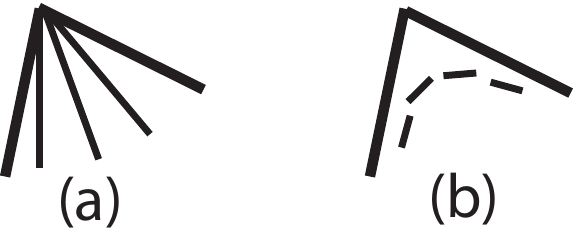}
        \caption{Two arrangements of nematics in the corner: (a) splay and (b) bend}
        \label{normal_angle}
    \end{center}
\end{figure}
As an illustrative example, we take the hexagon $E_6$  in Figure \ref{C62_15_states}.
The Dirichlet boundary conditions are
\begin{equation}
    \gamma_b = \gamma_k\ on\ C_k,\ k = 1,...,K,
    \label{infty_boundary}
\end{equation}
where
\begin{equation}
    \gamma_1 = \frac{\pi}{K}-\frac{\pi}{2},\ \gamma_{k+1} = \gamma_k+jump_k,\ k = 1,2,..,K-1.\nonumber
\end{equation}
We need to choose the two splay vertices where $\gamma$ rotates as in Figure \ref{normal_angle}(a). If the chosen corner is between $C_k$ and $C_{k+1}$, then
$jump_k = 2\pi/K-\pi$, otherwise $jump_k = 2\pi/K$, $k = 1,...,K-1$. We have $15$ different choices for the two ``splay" vertices, (i) $3$ of which correspond to the three pairs of diagonally opposite vertices, (ii) $6$ of which correspond to pairs of vertices which are separated by one vertex and (iii) $6$ of which correspond to ``adjacent" vertices connected by an edge (see Figure \ref{C62_15_states}). We refer to (i) as \emph{Para} states, (ii) as \emph{Meta} states and (iii) as \emph{Ortho} states. All $15$ states are locally stable in the sense that the corresponding second variation of (\ref{p_energy}) (see (\ref{eq:second})) is strictly positive according to our numerical computations.

\begin{figure}
    \begin{center}
        \includegraphics[width=0.9\columnwidth]{./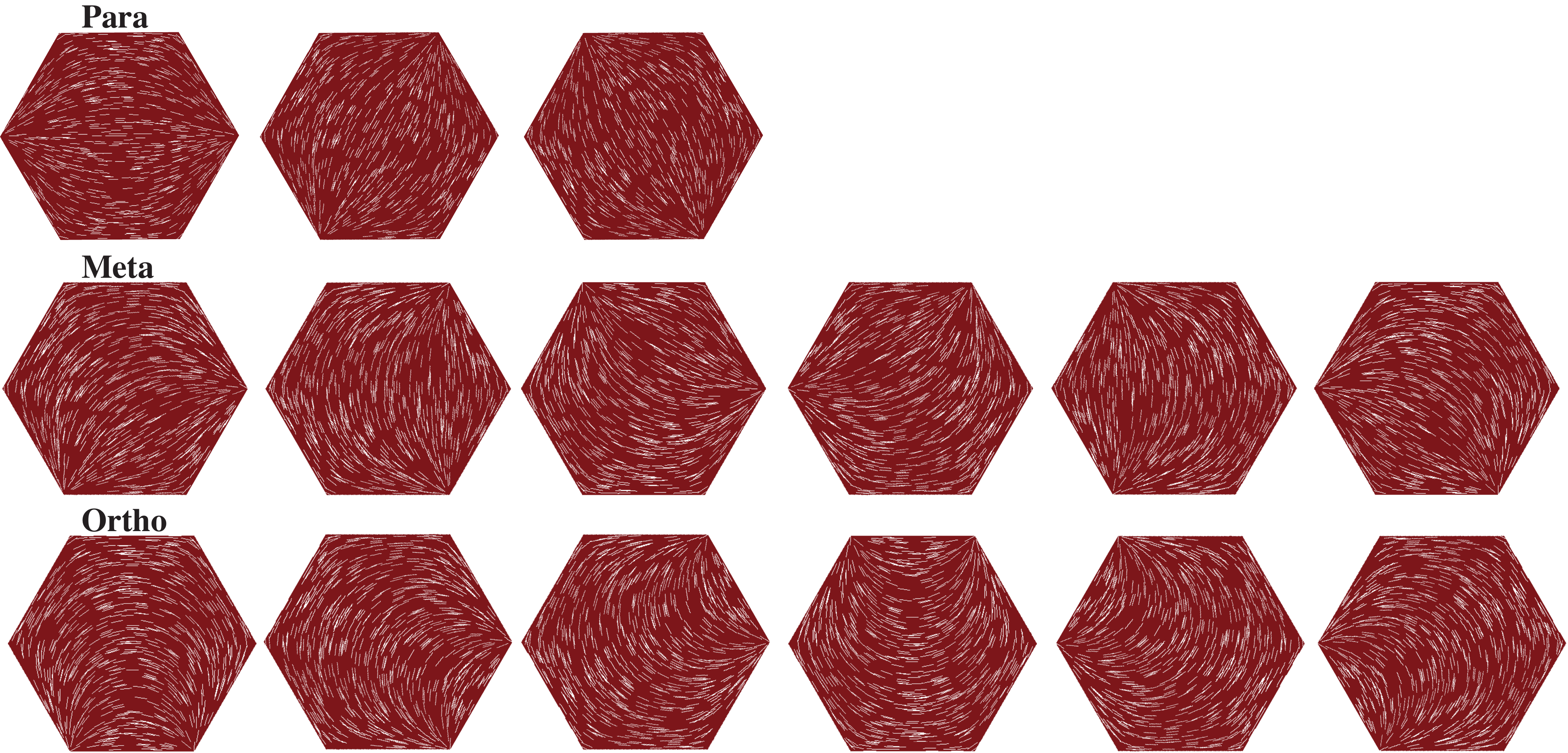}
        \caption{$6\choose 2 = 15$ solutions of (\ref{infty_euler}) subject to boundary condition (\ref{infty_boundary}) in hexagon domain.
        The vector $\left(\cos\gamma^{\infty},\sin\gamma^{\infty}\right)$ is represented by white lines. The red color indicates the order parameter $s^{\infty}=\frac{B}{2C}$, in order to facilitate comparison with the solution in Figure \ref{s_theta_2250_hexagon}.}
        \label{C62_15_states}
    \end{center}
\end{figure}
\subsubsection{The limiting profiles in (\ref{infty_euler}) are good approximations to solutions of (\ref{Euler_Lagrange}) for large $\lambda$}

In the numerical simulations, we take $B = 0.64\times 10^{4} N/m^2$ and $C = 0.35\times10^4 N/m^2$ to be fixed constants (also see \cite{canevari2017order}). In particular, this choice dictates the boundary values for $P_{11}$ and $P_{12}$ on $\partial E_K$.
    For large $\lambda$, the defect core sizes are very small and we have an intrinsic multi-scale problem. The limiting problem (\ref{infty_euler}) has no length scale and in what follows, we compare the limiting profiles in (\ref{infty_euler}) with solutions of (\ref{Euler_Lagrange}) for large but numerically tractable values of $\lambda$.
We take the regular hexagon as an example. For $\lambda^2 = 2250$, we compute three distinct Para, Meta and Ortho solutions of (\ref{Euler_Lagrange}) with different initial conditions.
We label the solutions as $\left(P_{11}^{2250},P_{12}^{2250}\right) = s^{2250}\left(\cos2\gamma^{2250},\sin2\gamma^{2250}\right)$. Similarly, we compute  $\left(P_{11}^{\infty},P_{12}^{\infty}\right) = s^{\infty}\left(\cos2\gamma^{\infty},\sin2\gamma^{\infty}\right)$, where $\gamma^{\infty}$ is the unique solution in (\ref{infty_euler}) subject to a fixed boundary condition and $s^{\infty} \equiv \frac{B}{2C}$. For three different choices of the boundary conditions, we numerically compute three different solutions, $\gamma^\infty_P$, $\gamma^\infty_M$ and $\gamma^\infty_O$, where $P, M,O$ label Para, Meta and Ortho respectively. The three different solutions for $\gamma^\infty$ yield the corresponding Para, Meta and Ortho profiles for $\P^\infty$ respectively. In all three cases, we numerically compute the measure $P_{11}^{2250}P_{12}^{\infty}-P_{12}^{2250}P_{11}^{\infty}$ and see that the measure concentrates near the pairs of splay vertices. Analogously, the measure, $|\P^{\infty}|^2-|\P^{2250}|^2$, also concentrates at the splay vertices i.e. $s^{2250}$ drops at the splay vertices (so these can be interpreted as localised defects where $\n_b$ has a discontinuity which cannot be removed by smoothening the corners of $E_K$) whereas $s^\infty$ is fixed (more details are visible in Figure \ref{s_theta_2250_hexagon}). We deduce that $\P^\infty$ is a good approximation to $\P^\lambda$ for $ \lambda$ sufficiently large, since the maximum numerical error is $10^{-4}$ away from the splay vertices. We do not have asymptotic expansions for $\P^{\lambda}$ to ascertain convergence rates at hand and this will be pursued in future work.
\begin{figure}
\centering
        \includegraphics[width=0.7\columnwidth]{./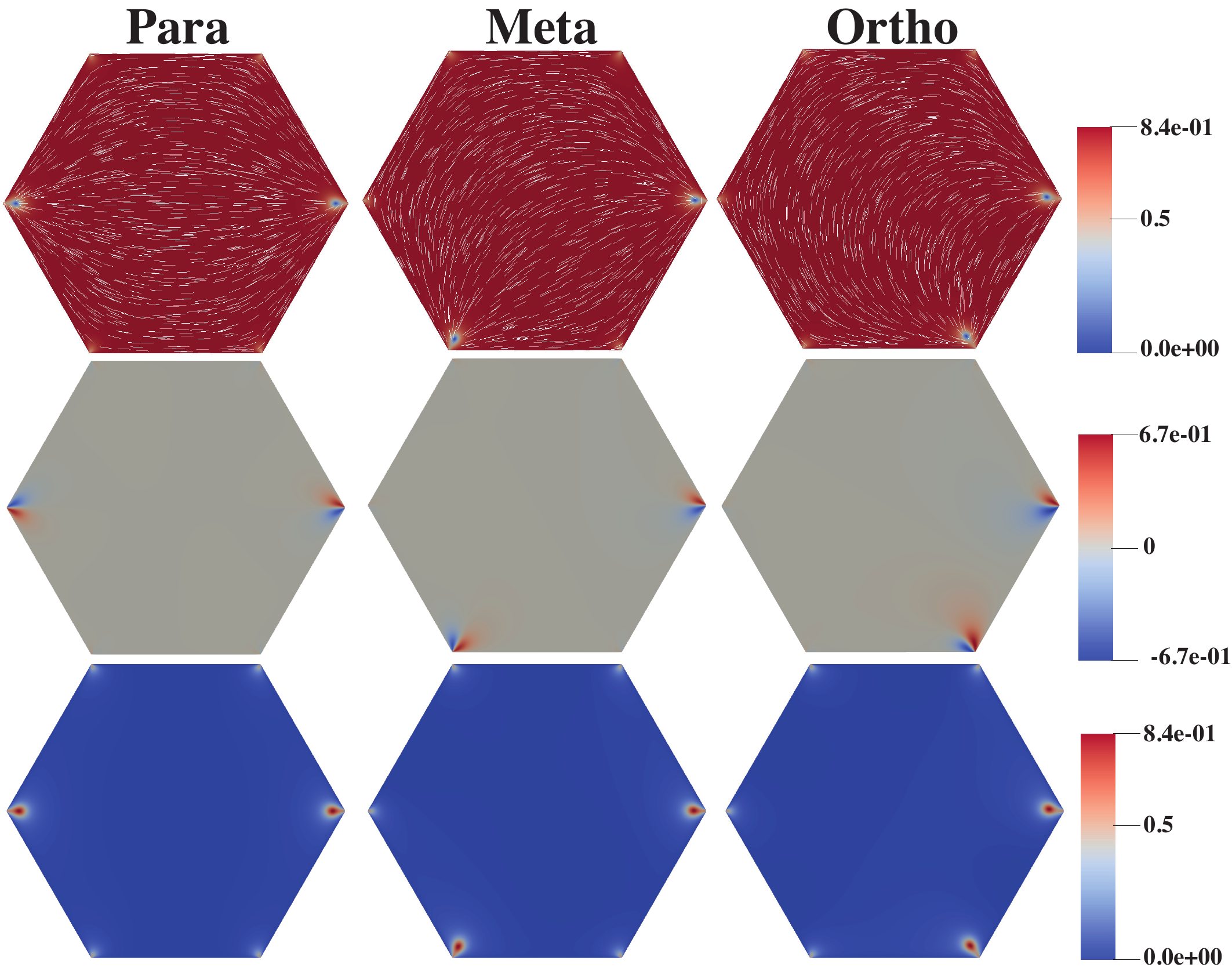}
        \caption{The images in the first row show the Ortho, Meta and Para solutions of (\ref{Euler_Lagrange}) with $\lambda^2 = 2250$. The images in the second and third rows show $P_{11}^{2250}P_{12}^{\infty}-P_{12}^{2250}P_{11}^{\infty} = s^{2250}s^{\infty}\sin\left(2\gamma^{\infty}-2\gamma^{2250}\right)$ and $\left(\P^{\infty}\right)^2/2-\left(\P^{2250}\right)^2/2 = \left(s^{\infty}\right)^2-\left(s^{2250}\right)^2$, respectively.}
        \label{s_theta_2250_hexagon}
\end{figure}

\subsubsection{Numerical methods}
\label{sec:nummethods}
We use the weak formulation of (\ref{Euler_Lagrange}) given by
\begin{equation}
\begin{aligned}
    0=&\int_{\Omega}\nabla P_{11}\cdot\nabla v_{11}+ \lambda^2\left(P_{11}^2+P_{12}^2-\frac{B^2}{4C^2}\right)P_{11}v_{11} \mathrm{dA},\\
    0=&\int_{\Omega}\nabla P_{12}\cdot\nabla v_{12}+ \lambda^2\left(P_{11}^2+P_{12}^2-\frac{B^2}{4C^2}\right)P_{12}v_{12} \mathrm{dA},
\end{aligned}
    \label{weak_lambda}
\end{equation}
to numerically compute the critical points of (\ref{p_energy}) for $0<\lambda<\infty$,
where $v_{11}$, $v_{12}$ are arbitrary test functions.
We use a triangle mesh for the domain, with mesh-size $h\leq\frac{1}{256}$, and the mesh is fixed in the numerical simulations. We set the value at the polygon vertices to be the average of the constant values on the two intersecting edges at the vertex in question (as previously mentioned) and provided $\epsilon <  h$ (recall $\epsilon$ is the width of the interpolation interval), we can numerically work with piecewise constant boundary conditions on the edges, $C_K$. Lagrange elements of order 1 are used for the spatial discretization. The linear systems for the limiting cases, $\lambda=0$ and $\lambda \to \infty$, are solved using LU solver and the nonlinear system in (\ref{weak_lambda}) is solved using a Newton solver, with a linear LU solver at each iteration. The tolerance is set to $1e-13$. Newton's method strongly depends on the initial condition and to obtain Ring-like solutions for small $\lambda$, we simply use $\P_R$ as the initial condition.
For large $\lambda$ and for the case of $E_6$, we choose $15$ different $\gamma_b$'s in (\ref{infty_boundary}) to compute the Para, Meta and Ortho states and use these limiting profiles, $\P^\infty$, as initial conditions for (\ref{weak_lambda}), for sufficiently large $\lambda$.

 We perform an increasing $\lambda$ sweep for the Ring branch and decreasing $\lambda$ sweep for distinct Para, Meta or Ortho solution branches to compute the bifurcation diagrams.
Once we obtain the solutions, we numerically compute their free energies by
\begin{equation}
    F[P_{11},P_{12}]: = \int_{\Omega}|\nabla P_{11}|^2+|\nabla P_{12}|^2+\frac{\lambda^2}{2}\left(P_{11}^2+P_{12}^2-\frac{B^2}{4C^2}\right)^2\mathrm{dA},
        \label{positive_energy}
\end{equation}
which is equivalent to (\ref{p_energy}), modulo a constant.
In this paper, all finite-element simulations and numerical integrations are performed using the open-source package FEniCS~\cite{olgg2012fenics}.
We study the stability of the solutions of (\ref{weak_lambda}) by numerically calculating the smallest real eigenvalue of the Hessian of the reduced energy (\ref{p_energy}) and the corresponding eigenfunction using the LOBPCG (locally optimal block preconditioned conjugate gradient) method in \cite{yin2019high,yin2019cons} (which is an iterative algorithm to find the smallest (largest) $k$ eigenvalues of a real symmetric matrix.) A negative eigenvalue is a signature of instability and we have local stability if all eigenvalues are positive.
We numerically compute a bifurcation diagram for the critical points of (\ref{p_energy}) on a hexagon and a pentagon in the next section, as a function of the edge length $\lambda$.

\section{Bifurcation Diagram for Reduced LdG Critical Points - Some Examples}
\label{subsec:numerics}
In \cite{robinson2017molecular}, the authors extensively discuss the reduced LdG bifurcation diagram on a square domain, as a function of the square length $D$.
For $D$ small enough, the WORS with an isotropic cross along the square diagonals, as shown in Figure \ref{direction_multi_edge}, is the unique solution. There is a bifurcation point at $D = D^*$ such that WORS is stable for $D<D^*$ and is unstable for $D>D^*$. The WORS bifurcates into stable diagonal solutions, labelled as D1 and D2 solutions, for which the nematic director is aligned along one of the square diagonals. There is a second bifurcation into unstable BD1 and BD2 solutions, which are featured by isotropic lines or defect lines localised near a pair of opposite edges. As $D$ increases further, there is a further critical value, $D = D^{**} > D^{*}$, for which BD1 and BD2 respectively bifurcate into two rotated states, R1, R2 for which the director rotates by $\pi$ radians between a pair of horizontal edges, and R3, R4 solutions, for which the director rotates by $\pi$ radians between a pair of vertical edges. These rotated states gain stability as $D$ increases and for $D\gg D^{**}$, there are six distinct stable solutions: two diagonal and four rotated states. The WORS exists for all $D$ as mentioned above.

Similarly, for a disc of sufficiently small disc radius, the Ring solution with $+1$-defect at the centre, referred to as PR (planar radial), is the unique solution. As the radius increases, the PR solution becomes unstable and bifurcates into a Para type solution, PP (planar polar), with two +1/2 defects which are on the same diameter.

We present two illustrative examples in this section - the critical points of (\ref{p_energy}) on a hexagon and pentagon as a function of $\lambda$. There are more stable solutions than the square and the domains have less symmetry than a disc, so the bifurcation diagrams are more complex. We discuss $E_6$ first. For sufficiently small $\lambda$, there is a unique ring-like minimizer, which is well approximated by $\P_R$ as discussed above (see in Figure \ref{direction_multi_edge} and Lemma 8.2 of Lamy\cite{lamy2014bifurcation}). For large $\lambda$, there are multiple stable solutions, e.g.  Para, Meta and Ortho, in Figure \ref{C62_15_states}. In Figure \ref{bifurcation_diagram}, we use the $\P^\infty$ states discussed above as initial conditions for large $\lambda$ to compute the corresponding $3$ stable $Para$, $6$ stable $Meta$ and six stable $Ortho$ states by continuing the corresponding $\P^\infty$ branches to smaller values of $\lambda$. This is done using standard arc continuation methods; we calculate the smallest eigenvalue of Jacobian of the right-hand side of (\ref{weak_lambda}). If the smallest eigenvalue is larger than $0$, the solution is stable otherwise the solution is unstable. Similarly, we use $\P_R$ as an initial condition for small $\lambda$ to find ring-like solutions for all $\lambda$, which are stable for small $\lambda$ and lose stability as $\lambda$ increases. Besides the ring-like, Para, Meta and Ortho states, we find three unstable BD states which are characterized by two lines of low order ($|\P|^2$) near two edges. In the BD state, the hexagon is separated into three regions by two ``defective low-order lines" such that the corresponding director (eigenvector with largest positive eigenvalue) is approximately constant in each region.

In Figure \ref{bifurcation_diagram}, we plot the free energy of solutions, in (\ref{positive_energy}), as $\lambda$ varies. In Figure \ref{bifurcation_diagram}, we distinguish between the distinct solution branches by defining two new measures, $\int_{\Omega} P_{12}\left(1+x+y\right)dxdy$ and $\int_{\Omega} P_{11}\left(1+x+y\right)dxdy$, and plot these measures versus $\lambda^2$ for the different solutions.
When $\lambda$ is small, the stable ring-like solution is the unique solution. Our numerics show that the ring-like solution (with the unique zero at the polygon center) exists for all $\lambda$ but there is a critical point $\lambda = \lambda^{*}$,  such that the ring-like solution is unstable for $\lambda>\lambda^{*}$ and bifurcates into two kind of branches: stable Para solution branches; unstable BD branches. The unstable BD branches further bifurcate into unstable Meta solutions at $\lambda=\lambda^{**}$. There is a further critical point $\lambda = \lambda^{***}$ at which the Meta solutions gain stability and continue as stable solution branches as $\lambda$ increases. Stable Ortho solutions appear as solution branches for $\lambda$ is large enough. The energy ordering is as follows: the $Para$ states have the lowest energy and the $Ortho$ states are energetically the most expensive, as can be explained on the heuristic grounds that bending between neighbouring vertices is energetically unfavourable.
\begin{figure}
\centering
    \begin{subfigure}{0.55\textwidth}
        \centering
        \includegraphics[width=\columnwidth]{./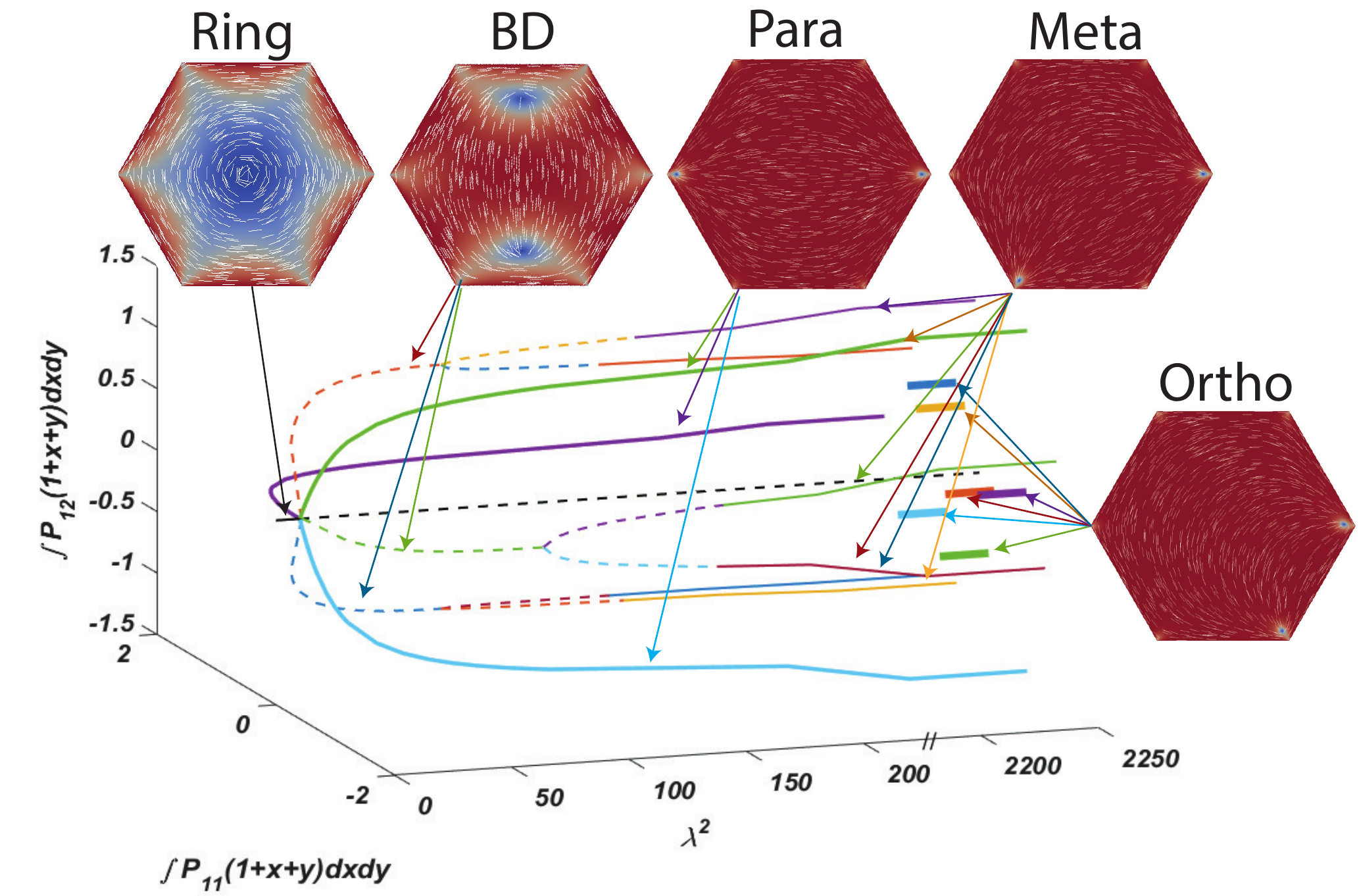}
    \end{subfigure}
    \begin{subfigure}{0.4\textwidth}
        \centering
        \includegraphics[width=\columnwidth]{./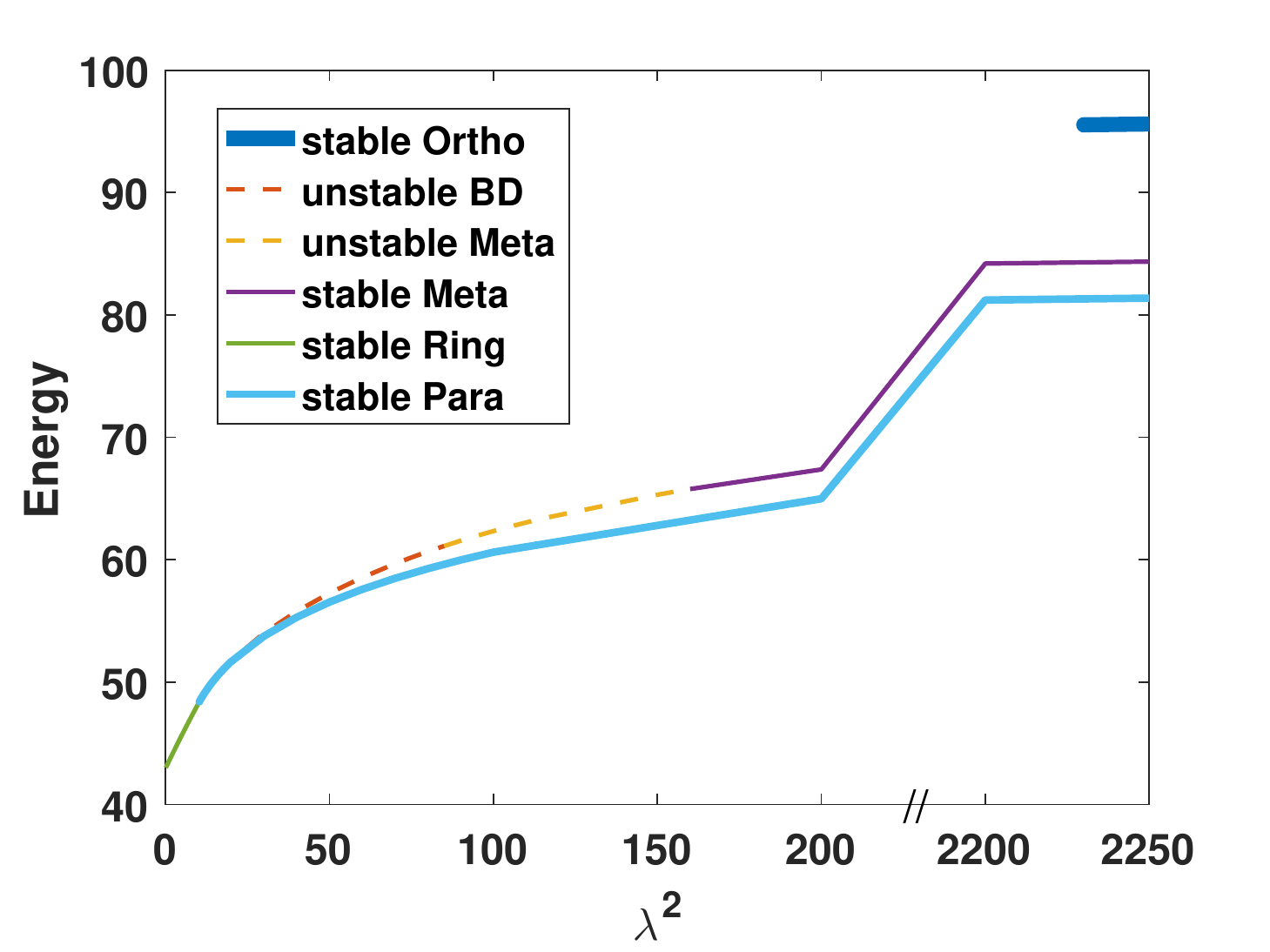}
    \end{subfigure}
    \begin{subfigure}{\textwidth}
        \centering
        \includegraphics[width=0.9\columnwidth]{./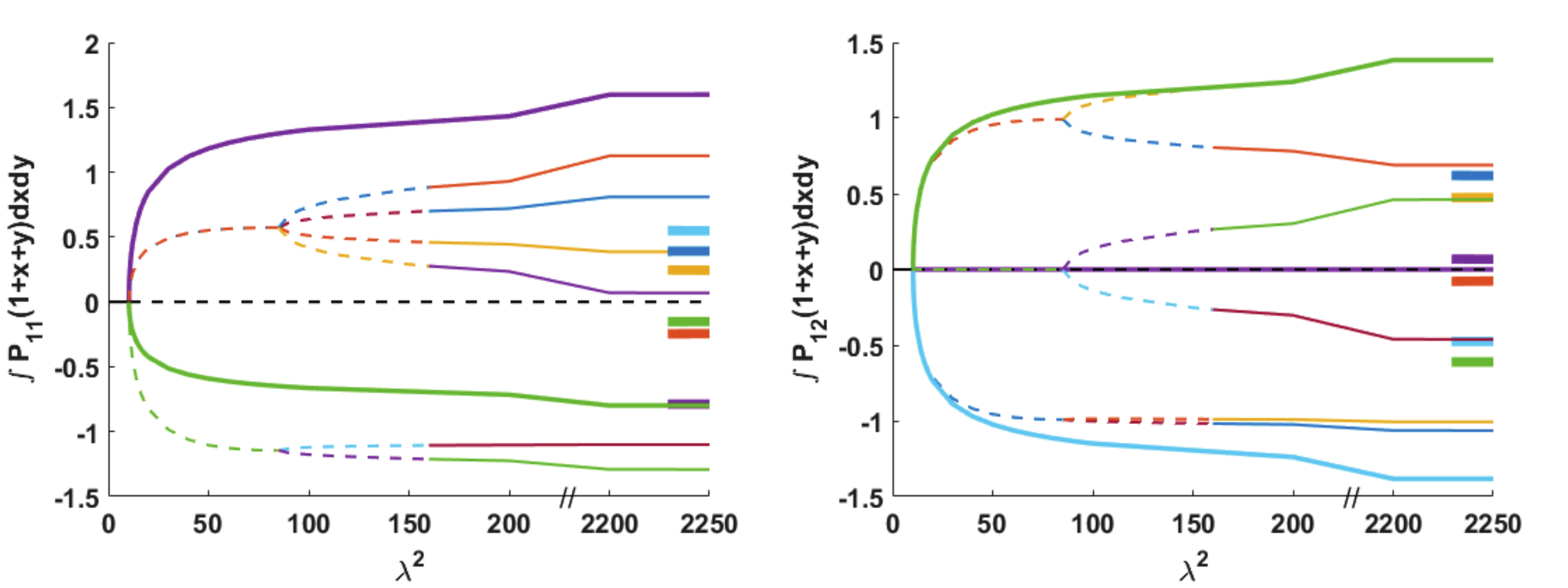}
    \end{subfigure}
        \caption{Bifurcation diagram for reduced LdG model in regular hexagon domain. Top left: plot of $\int P_{11}\left(1+x+y\right)dxdy$, $\int P_{12}\left(1+x+y\right)$ verses $\lambda^2$; top right: plot of the energy in (\ref{positive_energy}) verses $\lambda^2$; bottom: orthogonal 2D projections of the full 3D plot.}
        \label{bifurcation_diagram}
\end{figure}
The case of a pentagon is different. There is no analogue of the $Para$ states and there are $10$ different stable states for large $\lambda$ - (i) five $Meta$ states featured by a pair of splay vertices that are separated by a vertex and (ii) five $Ortho$ states featured by a pair of adjacent splay vertices. There are five analogues of the BD states which are featured by a single line of ``low" order along an edge and an opposite splay vertex. The corresponding bifurcation diagram is illustrated in Figure~\ref{pentagon_bifurcation_diagram}. In all cases, a solid line denotes local stability in the sense of the second variation and a dashed line denotes an unstable critical point.

The examples of a pentagon and a hexagon illustrate some generic features of reduced LdG critical points on polygons with an odd and even number of sides. These examples and the numerical results are not exhaustive but they do showcase the beautiful complexity and ordering transitions feasible in two-dimensional polygonal frameworks.
\begin{figure}
\centering
    \begin{subfigure}{0.55\textwidth}
        \centering
        \includegraphics[width=\columnwidth]{./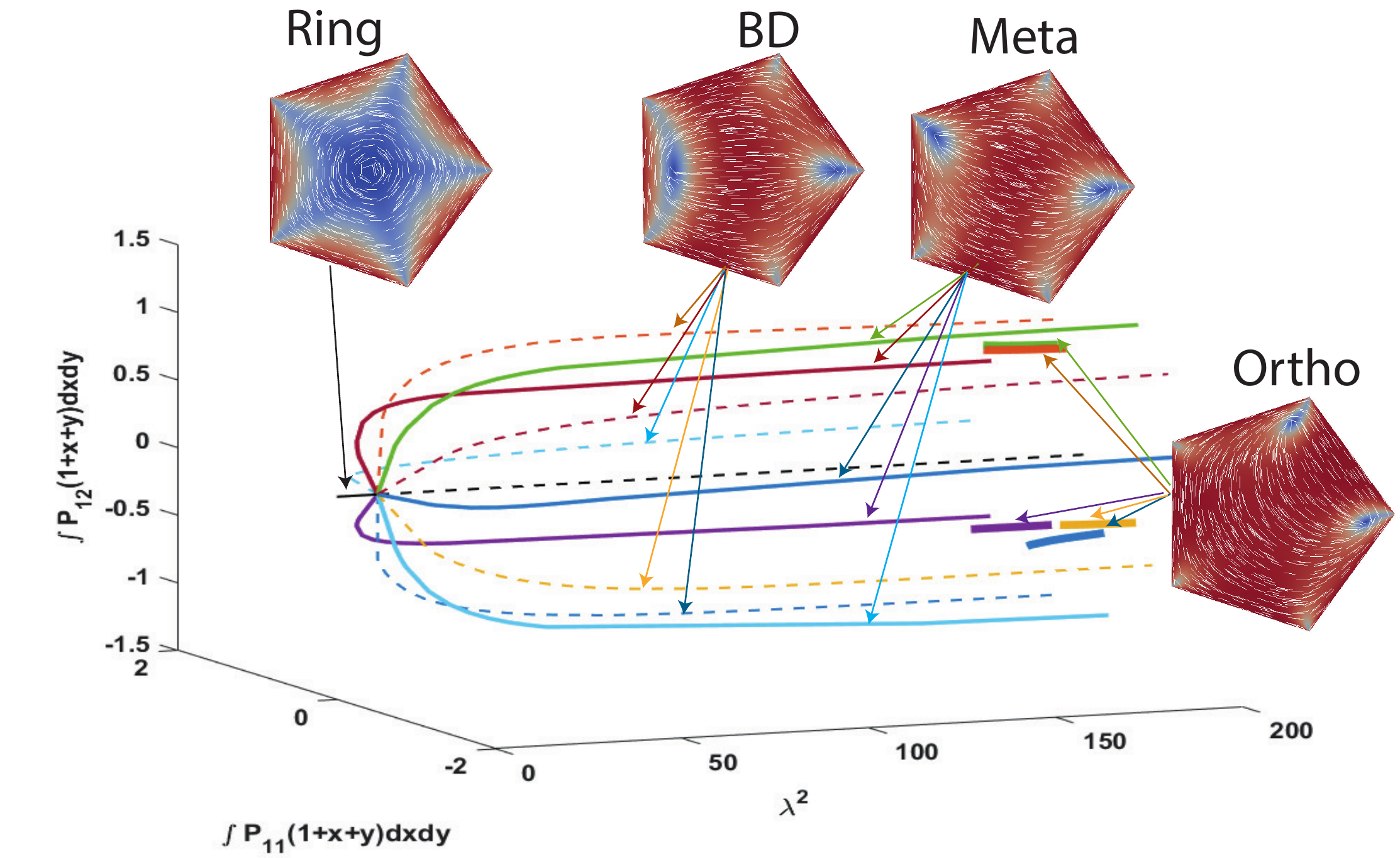}
    \end{subfigure}
    \begin{subfigure}{0.4\textwidth}
        \centering
        \includegraphics[width=\columnwidth]{./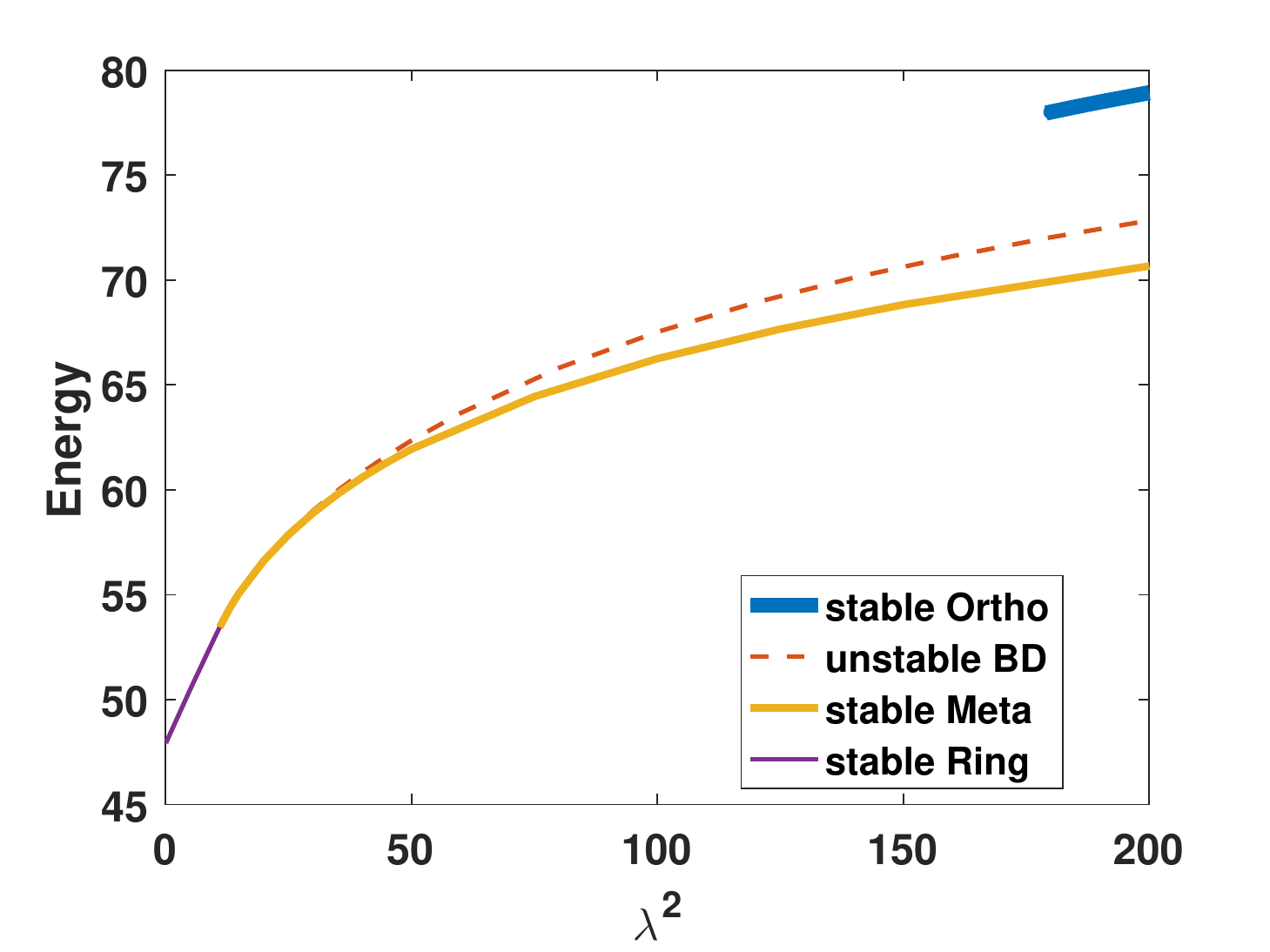}
    \end{subfigure}
    \begin{subfigure}{\textwidth}
        \centering
        \includegraphics[width=0.9\columnwidth]{./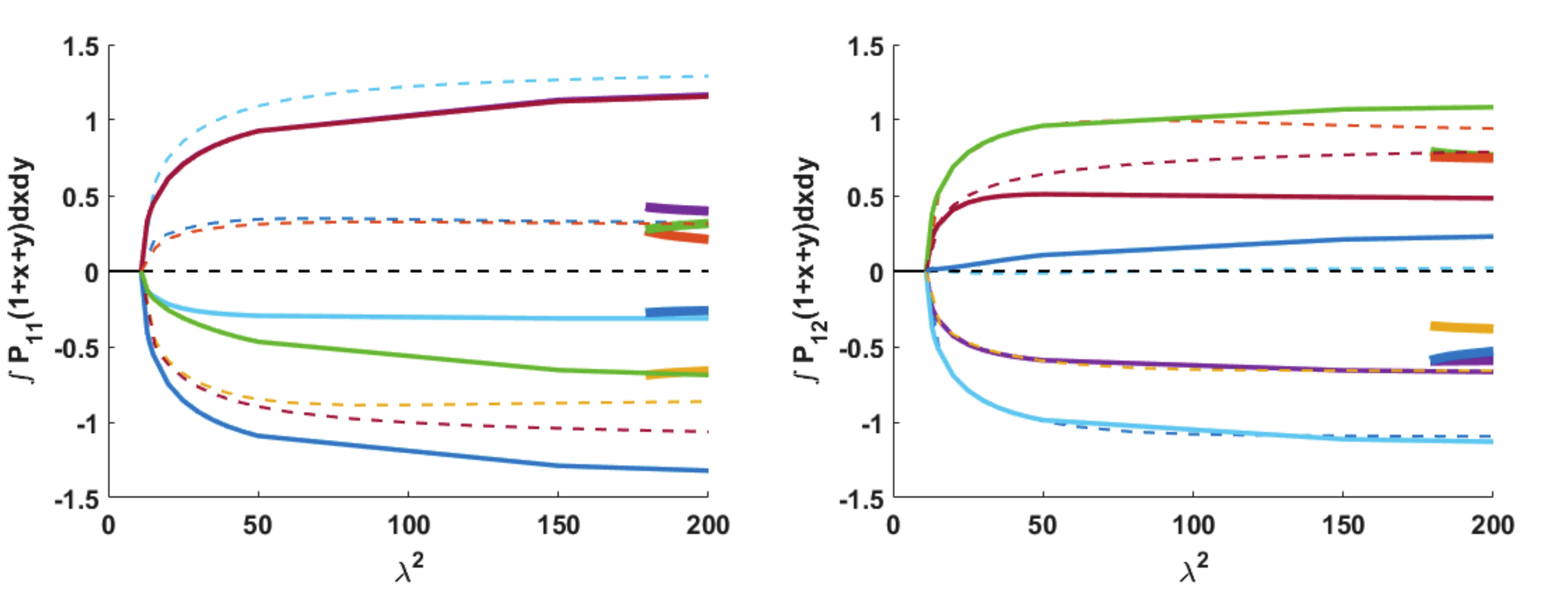}
    \end{subfigure}
    \caption{Bifurcation diagram for reduced LdG model in regular pentagon. Top left: plot of $\int P_{11}\left(1+x+y\right)dxdy$, $\int P_{12}\left(1+x+y\right)$ verses $\lambda^2$; top right: plot of the energy in (\ref{positive_energy}) verses $\lambda^2$; bottom: orthogonal 2D projections of the full 3D plot.}
        \label{pentagon_bifurcation_diagram}
\end{figure}

\section{Conclusion}
\label{sec:conclusion}
We study LdG critical points on 2D regular polygonal domains that have a fixed eigenvector in z-direction, with three degrees of freedom; these critical points are candidates for LdG energy minima in the thin film limit, as established by the Gamma convergence result in \cite{Golovaty2015Dimension}. Further, they also exist in three-dimensional frameworks, e.g. if we work on a well with a regular polygon as cross-section, as illustrated in \cite{canevari_majumdar_wang_harris}.
Working at a fixed temperature, these critical points only have two degrees of freedom and are simply critical points of a rescaled Ginzburg-Landau energy \cite{bethuel1994ginzburg}. Recent work \cite{canevari_majumdar_wang_harris} shows that the qualitative analytic features can be generalised to all temperatures $A<0$, at least in the case of square domains.
We study two asymptotic limits - the $\lambda \to 0$ limit of vanishing cross-section size, and the $\lambda \to \infty$ limit relevant for larger micron-scale systems. For small $\lambda\to 0$, we have unique ring-like LdG minima which are well approximated by the Ring Solution analyzed in Propositions ~\ref{symmetry_proposition} and \ref{isotropic_center}. The Ring Solution, $\P_R$, has some generic properties for all polygons, $E_K$ with $K\geq 3$. For $K \neq 4$, $\P_R$ has a unique zero at the polygon centre which manifests as a uniaxial point with negative order parameter for the full $\Q$-tensor given by
$$\Q = \P_R - \frac{B}{6C}\left(2 \z\otimes \z - \x\otimes \x - \y\otimes \y \right). $$
We call this critical point a ``Ring'' solution since the unique zero has the profile of a degree $+1$-Ginzburg Landau vortex for $K> 4$. The case $K=4$ is special since the corresponding $\P_R$ vanishes along the square diagonals yielding an interesting cross pattern \cite{canevari2017order}.
For an equilateral triangle, the unique zero has the profile of a $-1/2$-nematic point defect as opposed to a unit vortex. Further differences arise if we work with irregular polygons e.g. an isosceles triangle as opposed to an equilateral triangle. We retain a unique zero for $\P_R$ but the location of the zero strongly depends on the angles between successive edges for isosceles triangles. In other words, we can manipulate the geometry of a polygon to control the nature of zeroes, the dimensions of the nodal set and their locations and this gives new vistas for control of equilibria, at least in the $\lambda \to 0$ limit. Ring-like solutions exist for all $\lambda$ and lose stability as $\lambda$ increases.

In the $\lambda \to \infty$ limit, we present a simple estimate for the number of stable reduced LdG equilibria accompanied by numerical results for a pentagon and hexagon.  In the case of polygons with an even number of $K$ sides, we always have at least $K/2$ classes of equilibria dictated by the locations of the ``splay" vertices and the number of vertices separating the ``splay" vertices. In the case of $E_6$, there are three families - Para, Meta and Ortho of which Para have the lowest energy (since the corresponding splay vertices are the furthest) and Ortho have the highest energy, with two neighbouring splay vertices.
Additionally, we have a class of BD solutions with two defective lines in the hexagon interior, which are connected to the Meta solution branches. The Ortho solution branches appear to be isolated.
For a pentagon, or more generally for a polygon with an odd number of $K$ sides, we expect to have $(K-1)/2$ families of stable equilibria dictated by the locations of the splay vertices. For $E_5$, there is no Para family and the BD solutions exist as unstable solution branches for all $\lambda$. Further, the BD solutions only have one defective line of ``low order'' for $E_5$. Whilst BD solutions are unstable, they are special since our numerics suggest that  they are index 1 saddle points with precisely one unstable direction.
 We have the numerical tools to compute the unstable directions and the indices of saddle points of the LdG energy \cite{yin2019high}. This would naturally lead to challenging problems in control theory if we want to control instabilities for applications, and cutting-edge questions in Morse theory, topology and integrability since the study of reduced LdG equilibria has intrinsic connections to entire solutions of certain integrable PDEs e.g. nonlinear sigma model, Allen-Cahn equation. Further, the methods in our paper also apply, to some extent, to the study of nematic equilibria in domains with inclusions or obstacles, where the nematic is in the exterior of a polygonal inclusion. For example, the authors study nematic equilibria outside a square obstacle with homeotropic anchoring in \cite{phillips2011texture}. They report stable string textures which resemble the WORS ($P_R$ on $E_4$), surface defect textures which resemble the rotated solutions in \cite{lewis2014colloidal} and stable textures with surface and bulk defects. We hope to pursue the generic similarities and differences between nematic equilibria in the interior and exterior of polygonal domains, including studies of saddle-point solutions, in future work.

\textbf{Acknowledgements.}
We would like to thank  Mr. Lidong Fang and Ms. Lingling Zhao for helpful discussions. Yucen Han also thanks the University of Bath and Keble College for their hospitality. 
Apala Majumdar thanks the University of Bath for a Visiting Professorship and the University of Oxford for an OCIAM Visiting Professorship, and the Royal Society for support from the Newton Advanced Fellowship. Apala Majumdar also thanks Samo Kralj for insightful discussions in 2014.
This research is supported by a Royal Society Newton Advanced Fellowship awarded to Professor Lei Zhang and Professor Apala Majumdar.
\bibliographystyle{unsrt}
\bibliography{han_SIAP_2020_arxiv}
\end{document}